\newtheorem{lem}{Lemma}
\newtheorem{thm}{Theorem}
\journal{Computer Communications}
\begin{document}

\begin{frontmatter}

\title{SACRM: Social Aware Crowdsourcing with Reputation Management in Mobile Sensing}

\author[label1,label2]{Ju Ren\corref{cor1}}
\author[label1]{Yaoxue Zhang}
\author[label2]{Kuan Zhang}
\author[label2]{Xuemin (Sherman) Shen}

\cortext[cor1]{Ju Ren is the corresponding author. He is a PhD candidate in the Central South University, Changsha, China. And he is now a visiting scholar in the University of Waterloo, Ontario, Canada. His phone number is (+1)~519-781-2715 and email address is renjullcsu@gmail.com.}
\address[label1]{School of Information Science and Engineering, Central South University, Changsha, China 410083}
\address[label2]{Dept. of Electrical and Computer Engineering, University of Waterloo, Waterloo, Ontario, Canada N2L 3G1}

\begin{abstract}
Mobile sensing has become a promising paradigm for mobile users to obtain information by task crowdsourcing. However, due to the social preferences of mobile users, the quality of sensing reports may be impacted by the underlying social attributes and selfishness of individuals. Therefore, it is crucial to consider the social impacts and trustworthiness of mobile users when selecting task participants in mobile sensing. In this paper, we propose a \textbf{\underline{S}}ocial  \textbf{\underline{A}}ware \textbf{\underline{C}}rowdsourcing with \textbf{\underline{R}}eputation \textbf{\underline{M}}anagement (SACRM) scheme to select the well-suited participants and allocate the task rewards in mobile sensing. Specifically, we consider the social attributes, task delay and reputation in crowdsourcing and propose a participant selection scheme to choose the well-suited participants for the sensing task under a fixed task budget. A report assessment and rewarding scheme is also introduced to measure the quality of the sensing reports and allocate the task rewards based the assessed report quality. In addition, we develop a reputation management scheme to evaluate the trustworthiness and cost performance ratio of mobile users for participant selection. Theoretical analysis and extensive simulations demonstrate that SACRM can efficiently improve the crowdsourcing utility and effectively stimulate the participants to improve the quality of their sensing reports.
\end{abstract}

\begin{keyword}
Mobile sensing \sep crowdsensing \sep social impact \sep reputation \sep participant selection \sep crowdsourcing
\end{keyword}

\end{frontmatter}

\section{Introduction}

We have witnessed recently the dramatic proliferation of mobile computing devices such as smartphones and tablet computers~\cite{20}. Since these devices are generally equipped with a set of versatile sensors, mobile sensing (also known as participatory sensing or urban sensing) has emerged as a new horizon for ubiquitous sensing~\cite{23}. In a typical mobile sensing application~\cite{1,2}, a data requester first publishes a sensing task to crowdsource, and then selects a number of mobile users interested in the task to collect the desired data. Once the participants finish the sensing task, they submit their sensing reports to the data requester and earn their task rewards. Such a new paradigm of information collection brings great benefits (e.g., efficiency and low cost) and also challenging issues by task crowdsourcing~\cite{24}.

One of the challenging issues in mobile sensing is to select participants for crowdsourced sensing tasks. A few of research efforts have been invested to address the participant selection problem. Reddy et. al.~\cite{4} develop a recruitment framework to select well-suited participants for sensing tasks based on the spatio-temporal availability and personal reputation. They highlight participant selection should highly depend on the location and time availability and trustworthiness of the participants. However, little attention has been paid to the underlying social attributes of mobile users (e.g. interests, living area), which are critical for task crowdsourcing~\cite{25,30,28}, especially for participant selection. For a specific sensing task, it generally has a set of interested social attributes, and a large social attribute overlap between the task and mobile user indicates a potential matching and high task quality. For instance, if the published task is ``\textit{Find the cheapest Coca Cola in the Waterloo city}", the mobile users whose social attributes include ``\textit{Shopping}'' and ``\textit{Waterloo}'' might be preferred to be recruited in the task. While the published task changes to ``\textit{Find an unoccupied basketball court in the University of Toronto}", the mobile users whose social attributes include ``\textit{Sporting}" and ``\textit{Toronto}" should be preferred. Therefore, it is of great significance to consider the impact of social attributes on crowdsourcing, especially on the participant selection.

Another challenge in mobile sensing is to evaluate the trustworthiness of sensing reports and participants, and fairly allocate task rewards. In the presence of malicious participants, mobile crowdsensing is vulnerable to various types of attacks, e.g., denial-of-service attack~\cite{35} and data pollution attack~\cite{15,1}, etc. Reputation system is a promising technique and has been widely used in trustworthiness evaluation for mobile sensing~\cite{16,29}. Wang et. al.~\cite{1} propose a reputation framework to evaluate the trustworthiness of sensing reports and participants. Huang et. al.~\cite{15} employ the Gompertz function to compute the device reputation score and evaluate the trustworthiness of contributed data. However, most of the exiting works only focus on trustworthiness evaluation for participants, without adjusting their task rewards based on the quality of their sensing reports. Such that, the malicious users can still earn enough task rewards before their reputation goes to a low value. Therefore, in order to defend this attack and economically stimulate participants' contributions, it is crucial to adaptively allocate the task rewards to the participants according to their sensing report quality.

In this paper,  we propose a  \textbf{\underline{S}}ocial  \textbf{\underline{A}}ware \textbf{\underline{C}}rowdsourcing with \textbf{\underline{R}}eputation \textbf{\underline{M}}anagement (SACRM) scheme to select the well-suited participants and allocate  task rewards in mobile sensing. Compared with the existing works, we synthetically consider the social attributes, task delay and personal reputation in mobile sensing and define a utility function to quantify the effect of the three factors on crowdsourcing. The major contributions of our work are four folds.
\begin{asparaitem}
	\item We propose a participant selection scheme to choose the well-suited participants for sensing tasks and maximize the crowdsourcing utility under a fixed task budget. The proposed scheme consists of two participant selection algorithms, a dynamic programming algorithm to achieve the optimal solution and a fully polynomial time approximation algorithm to achieve the (1-$\epsilon$)-approximate solution.
	\item We propose a report assessment and rewarding scheme to measure the quality of sensing reports and allocate task rewards. Both of the report veracity and report delay are considered as two quality metrics for report assessment. And the task rewards are allocated according to the assessment results.
	\item We develop a reputation management scheme to evaluate the trustworthiness and cost performance ratio of mobile users for participant selection, which can stimulate participants to improve their report quality.
	\item We theoretically analyze the performance of the proposed participant selection algorithms. Extensive simulations demonstrate the effectiveness and efficiency of the SACRM scheme.
\end{asparaitem}

The remainder of the paper is organized as follows. Related works are reviewed in Section~\ref{sec2}. In Section~\ref{sec3}, we provide an overview of the system model and design goals. Section~\ref{sec4} presents the details of the proposed SACRM scheme. The theoretical analysis of SACRM is described in Section~\ref{sec5}. We evaluate the performance of SACRM by extensive simulations in Section~\ref{sec6}. Finally, Section~\ref{sec7} concludes the paper and introduces our future work.

\section{Related Work}
\label{sec2}
As an emerging information collection mechanism, crowdsourcing has been extensively studied in mobile sensing. Most of the related works focus on studying the incentive mechanisms to stimulate the participation of mobile users for crowdsourcing~\cite{2,5,6,9,17,10}.

Dynamic pricing is an effective incentive mechanism widely used in mobile sensing~\cite{2,5,6,7}. Yang et. al.~\cite{2} propose two incentive mechanisms to stimulate mobile users' participation respectively for platform-centric and user-centric mobile sensing. For the platform-centric model, they present a Stackelberg game~\cite{add_game} based incentive mechanism to maximize the utility of the platform. For the user-centric model, they design an auction-based incentive mechanism that is proved to be computationally efficient, individually rational, profitable and truthful. Jaimes et. al.~\cite{5} propose a recurrent reverse auction incentive mechanism using a greedy algorithm to select a representative subset of users according to their locations under a fixed budget. In~\cite{6}, the authors develop and evaluate a reverse auction based dynamic pricing incentive mechanism to stimulate mobile users' participation and reduce the incentive cost. Besides the dynamic pricing mechanism, personal demand and social relationship are introduced into the incentive mechanism study~\cite{9,19,10}. Luo et. al.~\cite{9} link the incentive to personal demand for consuming compelling services. Based on the demand principle, two incentive schemes, called Incentive with Demand Fairness (IDF) and Iterative Tank Filling (ITF), are proposed to maximize fairness and social welfare, respectively.

The majority of the existing incentive mechanisms are beneficial to stimulate the user participation, however, data assessment and reputation management are desired and critical to evaluate the trustworthiness of sensing data and mobile users~\cite{18,11,12,13,14,15}. Zhang et. al.~\cite{18} propose a robust trajectory estimation strategy, called TrMCD, to alleviate the negative influence of abnormal crowdsourced user trajectories and identify the normal and abnormal users, as well as to mitigate the impact of the spatial unbalanced crowdsourced trajectories. Huang et. al.~\cite{15} employ the Gompertz function~\cite{add_function} to compute the device reputation score and evaluate the trustworthiness of the contributed data. Since the reputation scores associated with the specific contributions can be used to identify the participants, privacy issues are highlighted in the reputation system design of mobile sensing~\cite{1,11,14}. Wang et. al.~\cite{1} propose a privacy-preserving reputation framework to evaluate the trustiness of the sensing reports and the participants based on the blind signatures. Christin et. al.~\cite{11} propose an anonymous reputation framework, called as IncogniSense, which generates periodic pseudonyms by blind signature and transfers reputation between these pseudonyms.

Recently, participant selection has been studied to achieve the optimal crowdsourcing utility~\cite{4,10}. Reddy et. al.~\cite{4} develop a recruitment framework to enable the data requester to identify well-suited participants for the sensing task based on geographic and temporal availability as well as the participant reputation. The proposed recruitment system approximately maximizes the coverage over a specific area and time period under a limited campaign budget with a greedy algorithm. Amintoosi et. al.~\cite{10} propose a recruitment framework for social participatory sensing to identify and select suitable and trustworthy participants in the friend circle, by leveraging the multihop friendship relations. However, they do not consider the social attributes of mobile users and adaptive rewards allocation, which play a significant role in crowdsourcing design.

\section{System Model and Design Goals}
\label{sec3}

\subsection{System Model}
\label{sec.systemmodel}
We consider a typical mobile sensing system, which is applied in~\cite{1,2} and illustrated as Fig.~\ref{fig.networkmodel}. The system consists of a mobile sensing application \textit{platform} and a large number of \textit{mobile users}. The application platform generally resides in the cloud and consists of multiple sensing servers, and the mobile users connect to the platform through WiFi or cellular network. Each mobile user can publish his\footnote{No specific gender here, and the same applies in the following paper.} sensing task on the platform, called as \textit{data requester}. And the users who are finally assigned the sensing task are called as \textit{participants}. We describe a complete mobile sensing process as follows.

At first, a data requester has a sensing task and task requirements (e.g., task deadline and task budget), and publishes it on the platform to recruit mobile users to finish it (Step 1). The mobile users, who are interested in participating in the sensing tasks, then estimate the cost and expected delay to finish this task and apply to participate in the task with their application information (e.g., bid price, expected task delay). (Step 2). Then, the data requester chooses a subset of applicants to take this sensing task based on their application information (Step 3). The participants collect the required data information and report them to the platform. The reported data are processed by the sensing servers and then provided to the data requester (Step 4). After that, the data requester would assess the quality of sensing reports and gives a feedback (e.g., sensing reports evaluation, allocated rewards, reputation evaluation), to the platform (Step 5). Finally, the platform processes the feedback (e.g., reputation update), and then returns it to the participants (Step 6).

All the procedures in the mobile sensing system are involved in the crowdsourcing and have attracted a number of research efforts. In this paper, we particularly focus on three key issues, i.e., how to choose the applicants to take the sensing task in Step 3, and how to evaluate the sensing reports and reputation, and develop an adaptive rewarding scheme in Step 5, as well as the feedback processing (e.g., reputation update and management) in Step 6.
\begin{figure}[!t]
\includegraphics[width=0.5\textwidth]{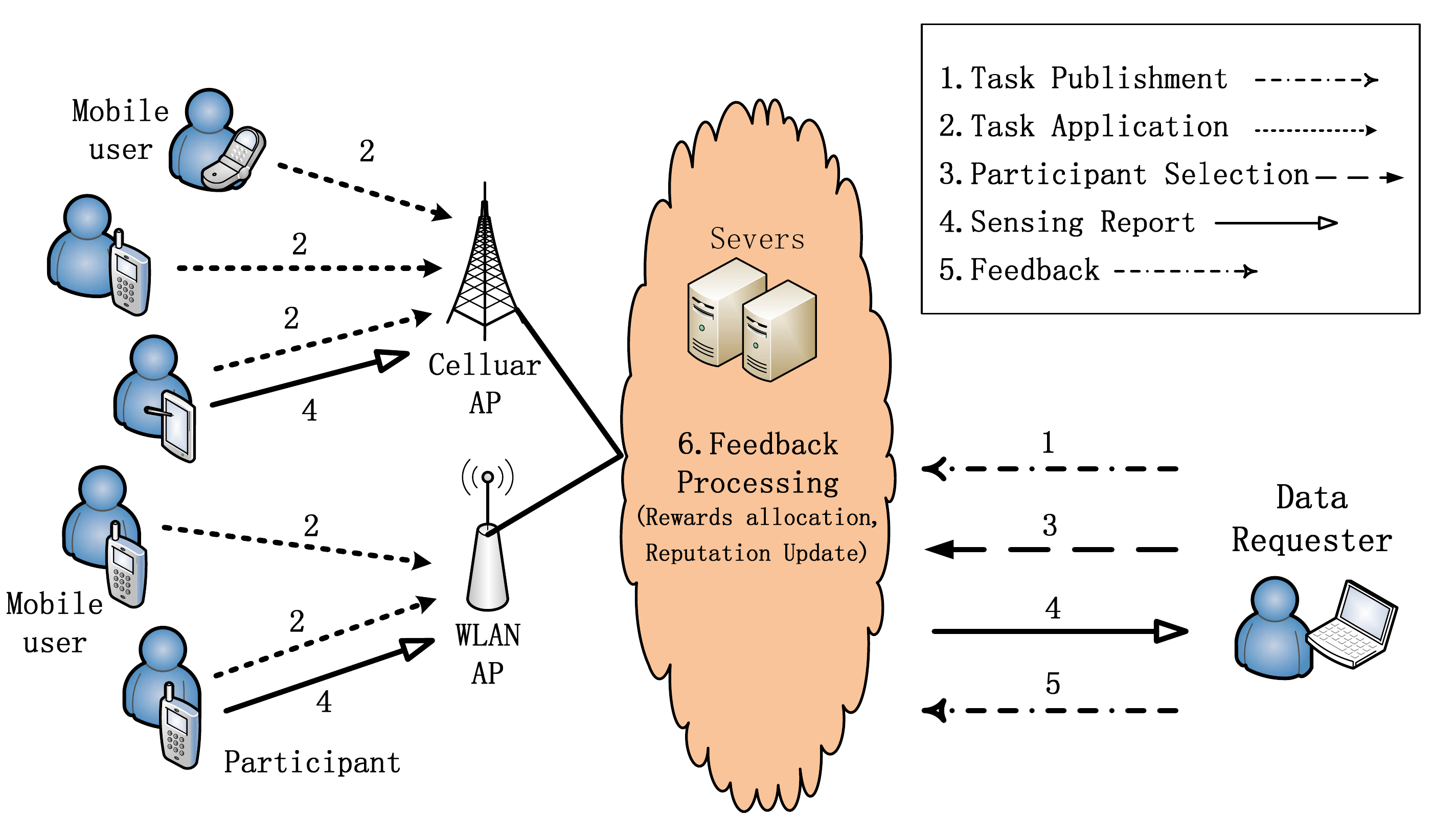}
\caption{Architecture of Mobile Sensing System}
\label{fig.networkmodel}
\end{figure}

\subsection{Design Goals}
The proposed SACRM aims to select the well-suited participants for a specific sensing task and adaptively reward the participants based on the quality of their sensing reports. More specifically, the objectives of SACRM can be summarized as two-fold.

(1) \textit{Participant Selection}.  Since the social attribute, task delay and reputation are crucial for the task crowdsourcing, SACRM should consider these factors and be able to select the well-suited participants for the sensing tasks and improve the crowdsourcing utility.

(2) \textit{Accurate Sensing Report Assessment and Adaptive Reward Allocation}. Since some malicious participants may submit bad sensing reports or contribute noting for their participated sensing tasks, SACRM should be able to accurately assess the submitted sensing reports and adaptively allocate the rewards to the participants based on the assessed report quality.

\section{The Proposed SACRM Scheme}
\label{sec4}
The proposed SACRM scheme consists of three components: (1) \textit{Participant Selection}, (2) \textit{Sensing Report Assessment and Adaptive Rewarding}, and (3) \textit{Reputation Management}. In \textit{Participant Selection}, we define a utility function to quantify the effect of social attributes, task delay and reputation on the crowdsourcing and formulate the participant selection problem as a combination optimization problem. Two participant selection algorithms are proposed to select the well-suited participants for the sensing task and maximize the crowdsourcing utility. In \textit{Sensing Report Assessment and Adaptive Rewarding}, we first evaluate the quality of submitted sensing reports in terms of the report veracity and report delay. And then, we propose a rewarding scheme to allocate task rewards based on the report assessment results. In \textit{Reputation Management},  the reputation of the participant is updated according to both of the assessed report quality and the cost performance ratio of the participant. The overview of the SACRM is described as~\ref{fig.schemeoverview}. To assist the understanding of the following paper, we summarized the frequently used notations in Table~\ref{table1}.
\begin{figure}[!t]
\includegraphics[width=0.5\textwidth]{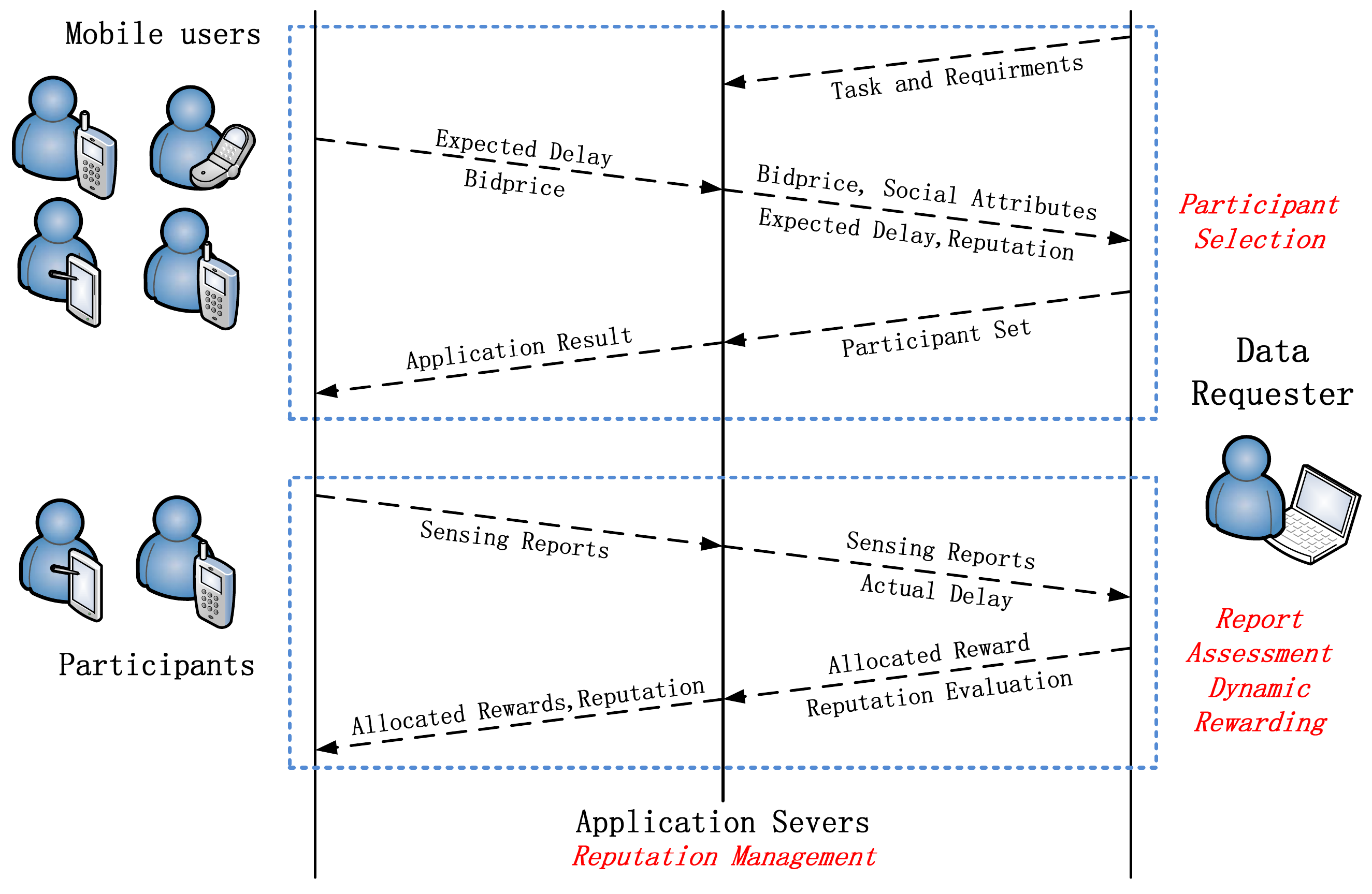}
\caption{Overview of the SACRM scheme}
\label{fig.schemeoverview}
\end{figure}

\begin{table}[!t]
    \caption{Frequently Used Notations}
    \centering
    \begin{tabular}{|l|l|}
         \hline
         Notation & Meaning \\
         \hline
         $\mathcal{U}$ & Set of mobile users $\{u_1, ..., u_n\}$ \\
         \hline
         $\mathcal{P}_t$ & Set of participants in task $t$, and $\mathcal{P}_t \subseteq  \mathcal{U}$\\
          \hline
         $B_t $ & Budget for finishing $t$\\
          \hline
         $d_t $ & Deadline of task $t$\\
          \hline
         $b_t^{i}$ & $u_i$'s bid price for $t$ \\
         \hline
         $d_t^{i}$ & $u_i$'s expected delay for finishing $t$\\
         \hline
         $e_t^{i}$ & Expected utility of choosing $u_i$ for $t$\\
         \hline
         $SA_{i}$ & Set of $u_i$'s social attributes \\
         \hline
         $TA_t$ & Set of social attributes interested by $t$\\
         \hline
         $R(u_i)$ & Reputation value of $u_i$\\
         \hline
         $Ie_t^{i}$ & Amplified $e_t^{i}$ to be non-negative integer \\
         \hline
         $sr_t^{i}$ & $u_i$'s sensing report for task $t$\\
         \hline
         $\Phi_t^{i}$ & Veracity score of $sr_t^i$\\
         \hline
         $\zeta_t^{i}$ & Delay deviation score of $sr_t^i$ \\
         \hline
         $v_t^{i}$ & Final report assessment score of $sr_t^i$\\
         \hline
         $tp_t^{i}$ & Allocated reward to $u_i$ for participating $t$\\
         \hline
         $r_t^{i}$ & $u_i$'s evaluated reputation score for participating $t$\\
         \hline
    \end{tabular}
     \label{table1}
\end{table}

\subsection{Participant Selection}
In SACRM, we consider three main factors, including social attributes, expected task delay and reputation, for participant selection. To make it clearer, we define the three factors as follows.
\begin{asparaenum}
\item \textit{\textbf{Social Attributes}}. Social attributes are the characteristics or features of an individual in his social life, such as interests, friend circle, living area. Generally, different tasks are interested in various social attributes and a large social attribute overlap between the task and the user indicates a potential matching and a high task quality.
\item \textit{\textbf{Expected Task Delay}}. The expected task delay for a specific task highly depends on the user's location and time availability. For each user $i$ and task $t$, the expected delay is defined as the expected duration from the time when $i$ is assigned the task $t$ until the time $i$ can finish this task. The expected delay indicates the timeliness of the crowdsourcing task, which is an important factor for participant selection, particularly in delay-sensitive tasks.
\item \textit{\textbf{Reputation}}. Due to the selfishness of the mobile users, a crucial part of the system is to assess if the quality and reliability of the reported sensed data deserve its bid price. We define the reputation of a mobile user $u_i$, denoted as $R(u_i)$, is a synthesized evaluation on the past sensing reports sent by $u_i$, as perceived by the platform. The platform maintains a reputation database to record the reputation value of each mobile user. When a new mobile user registers with the platform, the platform creates a unique ID and initializes an initial reputation value $R_0$ for the new user.
\end{asparaenum}

\subsubsection{Problem Formulation}
Participant selection is to select the well-suited mobile users to participate in the sensing task based on the task requirements and the application information, and hence to maximize the utility of the data requester. Therefore, we formulate the participant selection problem as follows.

A data requester $DR$ publishes a task $t$ and the task requirements of $t$ on the platform. The requirements of $t$ include the task budget $B_t$ for $t$, the interested social attributes set, denoted by $TA_t = \{ta_1, ta_2, ..., ta_k\}$, and the delay threshold $d_t$ that means the expected delay should be not larger than $d_t$. There is a set of mobile users, denoted by $\mathcal{U} = \{u_1, u_2, ..., u_n\} $, interested in participating in the task $t$. Here, $n \ge 2$. For each $u_i$, he has a set of personal social attributes, denoted by $SA_i =  \{sa_1^{i}, sa_2^{i}, ..., sa_{\tau_i}^{i}\}$, and a reputation value denoted as $R(u_i)$. In addition, $u_i$ estimates the expected delay $d_t^{i}$ to finish $t$ and the bid price $b_t^{i}$ for $t$, where $b_t^{i} \le B_t$. And then $u_i$ submits it to the platform, when $u_i$ is applying for the task $t$. We denote the expected utility of $DR$ as $e_t^{i}$ if he chooses the user $u_i$ to undertake the task $t$, and the total expected utility of $DR$ obtained from the crowdsourcing as $E_t$. According to the factor definitions, $e_t^{i}$ is determined by the overlap of $TA_t$ and $SA_i$, $d_t^{i}$ and $R(u_i)$, which will be detailed in the following section. Therefore, \textit{participant selection} is to choose a subset $\mathcal{P}_t$ of $\mathcal{U}$ to participate in the task $t$, to maximize the $E_t$. The problem is mathematically formulated as choosing $\mathcal{P}$ to
\begin{equation}
\label{eq.target}
\text{Maximize}~E_t = \sum_{i \in \mathcal{P}_t} e_t^{i};
\end{equation}
\begin{equation}
\label{eq.subject}
\begin{matrix}
\text{Subject~to}\begin{cases}& \sum_{i \in \mathcal{P}_t} b_t^{i} \le B_t;\\
& d_{t}^{i} \le d_t,~\text{for}~i \in \mathcal{P}_t; \\
& b_t^{i} \le B_t,~\text{for}~i \in \mathcal{P}_t.
\end{cases}
\end{matrix}
\end{equation}

\subsubsection{Utility Function}
We formulate the participant selection problem above, where the expected utility $e_t^{i}$ of $DR$ choosing $u_i$ to participate in the task $t$ is not accurately defined. In this section, we first define the utility functions of the social attributes, expected task delay and reputation to quantify the effect of them on participant selection. For a specific task $t$, the expected utility $e_t^{i}$ of choosing the user $u_i$ can be determined by the overlap of $SA_i$ and $TA_t$ , the expected delay $d_t^{i}$ and the reputation $R(u_i)$. Without loss of generality, we set $e_t^{i}$ consists of $f(SA_i, TA_t)$, $g(d_t^{i}, d_t)$ and $h(R(u_i))$. Here, $f(SA_i, TA_t)$, $g(d_t^{i}, d_t)$ and $h(R(u_i))$ denote the utility functions of social attributes overlap, expected delay and reputation value respectively. In order to facilitate the comparison of each utility $e_t^{i}$, we set $0 < e_t^{i} \le 1$.

The interested social attributes set of the task $t$ indicates the user with more common social attributes is expected to bring more benefits~\cite{35,36}, which is denoted by $TA_t = \{ta_1, ta_2, ..., ta_k\}$. Therefore, for each users $u_i$, the utility function of social attributes overlap $f(SA_i, TA_t)$ should be linearly dependent on the overlap ratio of $TA_t$ and $SA_i$. We define $f(SA_i, TA_t)$ as
\begin{align}
f(SA_i, TA_t) = (1 - \alpha) \dfrac{ \left |SA_i \cap TA_t\right | }{\left |TA_t\right |} + \alpha
\label{eq.socialfunction}
\end{align}
where $\left |TA_t\right |$ denotes the number of elements in $TA_t$; $ \left |SA_i \cap TA_t\right |$ denotes the number of common social attributes between $SA_i$ and $TA_t$; $\alpha$ denotes the default utility for the users without common social attributes with $TA_t$ and $0 < \alpha < 1$.

Expected task delay is another factor that should be considered in participant selection. For each user $u_i$, the expected delay varies with the different sensing tasks. If the expected delay $d_t^{i}$ of $u_i$ does not exceed the delay threshold $d_t$, $u_i$ is a participant candidate of the task $t$. Since a lower expected delay indicates a quicker sensing report, the delay utility function $g(d_t^{i}, d_t)$ should decrease with the increment of $d_t^{i}$. Thus, we define $g(d_t^{i}, d_t)$ as
\begin{align}
g(d_t^{i}, d_t) = (1 - \beta) (1 - \exp{(d_t^{i} - d_t)}) + \beta,~\text{if}~d_t^{i} \le d_t
\label{eq.delayfunction}
\end{align}
where $\beta$ denotes the default utility for the user with $d_t^{i} = d_t$ and $0 < \beta < 1$. The value of $g(d_t^{i}, d_t)$ equals to $\beta$ when $d_t^{i} = d_t$. While $g(d_t^{i}, d_t)$ approaches to 1 if $d_t^{i}$ is close to 0. The exponent function is adopted to stimulate a smaller expected delay, since $g(d_t^{i}, d_t)$ significantly decreases when $d_t^{i}$ is close to $d_t$.

Reputation is the last but absolutely not the least factor that is considered in participant selection. It indicates the quality of the sensing reports that the mobile user submitted in his past sensing tasks and the trustworthiness and cost performance ratio of the mobile user. Therefore, the reputation utility function $h(R(u_i))$ should be a monotonically increasing function. We set the maximum reputation value as $R_m$ and the minimum reputation value as $R_s$, then we have $R_s \le R(u_i) \le R_m$. Thus, the $h(R(u_i))$ is defined as
\begin{align}
h(R(u_i)) = \begin{cases}
\gamma + (1-\gamma)\ln\left ( 1+ \lambda \right );& \text{ if } R_0 \le R(u_i) \le R_m \\
\gamma \exp(R(u_i) - R_0);& \text{ if } R_s \le R(u_i) < R_0
\end{cases}
\label{eq.reputationfunction}
\end{align}
where $\lambda = \dfrac{(e-1)(R(u_i)-R_0)}{R_m-R_0}$; $R_0$ is an initial reputation value for a new mobile user; $\gamma $ denotes the default utility for the user with $R(u_i) = R_0$ and $0 < \gamma < 1$. The $\gamma$ can be set as a neutral value $0.5$, since we have $0 < h(R(u_i)) \le 1$. The exponent function makes the reputation value decrease sharply if $R(u_i) < R_0$, and the logarithm function markedly increases the reputation value if $R(u_i) \ge R_0$.

Combining the Eq.~(\ref{eq.socialfunction}), (\ref{eq.delayfunction}) and (\ref{eq.reputationfunction}), we define the utility function $e_t^{i}$ of choosing the user $u_i$ as
\begin{align}
e_t^{i} = w_s \cdot f(SA_i, TA_t) + w_d \cdot g(d_t^{i}, d_t) + w_r \cdot h(R(u_i))
\label{eq.delayfunction}
\end{align}
where $w_s$, $w_d$ and $w_r$ denote the weights of social attributes, delay and reputation respectively; $0 \le w_s, w_d, w_r \le 1$ and $w_s + w_d + w_r = 1$.

For different mobile sensing tasks, task requirements might be various and hence the criteria of participant selection vary in each task too. For instance, expected task delay should be the dominating factor in participant selection of the delay-sensitive sensing tasks, such as ``\textit{Take a photo for the Davis Centre building of the University of Waterloo in 5 minutes}". So, we can increase the weight of the expected delay utility, i.e., $w_d$ and set smaller values to $w_s$ and $w_r$. Correspondingly, $w_s$ should be increased for the speciality-sensitive sensing tasks and $w_r$ should be improved for the verasity-sensitive tasks. In summary, we can dynamically adjust the values of the three weights to fit for the various task requirements.

Note that, the values of $f(SA_i, TA_t)$, $g(d_t^{i}, d_t)$ and $h(R(u_i))$ are in $(0, 1]$. It means the value of $e_t^{i}$ is also in $(0, 1]$, which is useful for the utility comparison in participant selection.

\subsubsection{Proposed Participant Selection Algorithms}
Since the utility function $e_t^{i}$ has been determined in the previous section, the objective function of participant selection can be rewritten as
\begin{align}
\label{eq.target}
E_t = \sum_{i \in \mathcal{P}} \left ( w_s \cdot f(SA_i, TA_t) + w_d \cdot g(d_t^{i}, d_t) + w_r \cdot h(R(u_i)) \right )
\end{align}

In this subsection, we describe the participant selection algorithm in detail. We first prove that finding an optimal participant set for the task $t$ is an NP-hard problem (i.e., can be reduced to the \textit{0-1 Knapsack Problem}~\cite{add_knapsack}).
\begin{thm}
\label{thm.nphard}
The participant selection algorithm is NP-hard.
\end{thm}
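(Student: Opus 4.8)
The plan is to establish NP-hardness by a polynomial-time many-one reduction from the classical 0-1 Knapsack problem, whose decision version is NP-complete~\cite{add_knapsack}; the two problems are essentially the same, so I will exhibit the reduction in the direction that actually yields hardness, that is, from Knapsack to participant selection. Since participant selection is stated as an optimization problem, I work with its decision version: given an instance together with a threshold $E^{\ast}$, decide whether some feasible $\mathcal{P}_t$ satisfies $\sum_{i\in\mathcal{P}_t} e_t^{i}\ge E^{\ast}$. Any exact algorithm for the optimization problem answers this decision question, so NP-hardness of the decision version is enough.

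First I would take an arbitrary Knapsack instance: items $1,\dots,m$ with positive integer weights $w_1,\dots,w_m$, positive values $v_1,\dots,v_m$, capacity $C$, and target $V$, discarding any item with $w_j>C$ since it lies in no feasible packing. I then build, in time linear in the input size, a participant selection instance with one candidate user $u_j$ per remaining item: set $b_t^{j}=w_j$ and $B_t=C$; make the auxiliary constraints vacuous by assigning every $u_j$ an expected delay strictly below the deadline $d_t$ (legitimate, as the delays are instance data), so that $d_t^{j}\le d_t$ and $b_t^{j}=w_j\le C=B_t$ both hold; and encode the value $v_j$ into the utility $e_t^{j}$. Because maximizing $\sum_{j\in S}v_j$ is invariant under scaling all values by a fixed positive constant, I rescale by $\rho=1/\max_j v_j$ and put $e_t^{j}=\rho v_j\in(0,1]$, precisely the admissible range for the utilities; taking the threshold $E^{\ast}=\rho V$ then makes ``there is a feasible participant set of total utility at least $\rho V$'' equivalent to ``there is a feasible packing of total value at least $V$''. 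This is a valid polynomial-time reduction, so NP-hardness of 0-1 Knapsack transfers to participant selection.

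The one step that genuinely needs care — and what I expect to be the main obstacle — is reconciling the abstract utilities $e_t^{i}$ with their prescribed closed form $e_t^{i}=w_s f(SA_i,TA_t)+w_d g(d_t^{i},d_t)+w_r h(R(u_i))$ from Eqs.~(\ref{eq.socialfunction})--(\ref{eq.reputationfunction}), each component of which carries a strictly positive additive floor ($\alpha$, $\beta$, $\gamma$); consequently every $e_t^{i}$ is bounded below by a positive constant, and such a per-user offset perturbs the objective $\sum_{i\in\mathcal{P}_t}e_t^{i}$ by a term proportional to $|\mathcal{P}_t|$, which varies across feasible sets. The natural resolution is to regard the $e_t^{i}$, like the bids $b_t^{i}$, as part of the input handed to the selection algorithm — the standard way to measure the complexity of such a combinatorial problem — in which case the problem is literally 0-1 Knapsack with item values confined to $(0,1]$ and the reduction above applies verbatim. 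If one instead insists that the utilities be produced by $f,g,h$, I would realize the values through a single component (e.g.\ set $w_s=w_d=0$, $w_r=1$ and choose the reputations so that $h(R(u_j))$ is an increasing encoding of $\rho v_j$) with $\rho$ small enough that the whole cardinality-dependent contribution stays below the value granularity, so the threshold $E^{\ast}$ still separates the YES and NO instances exactly. Either route gives a polynomial-time reduction and completes the proof.
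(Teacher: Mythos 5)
Your proposal is correct and rests on the same core idea as the paper's own proof: the participant selection problem, after discarding candidates violating the delay or per-bid constraints, is structurally a 0-1 Knapsack instance, and hardness is inherited via a reduction between the two. However, you handle two points more carefully than the paper does. First, the paper's proof opens by announcing a reduction \emph{of} participant selection \emph{to} Knapsack (which by itself would only bound the problem from above) and only in its closing sentence asserts the reverse direction needed for NP-hardness; you state the correct direction explicitly, work with the decision version, and discard items with $w_j > C$ so the auxiliary constraints are vacuous. Second, you are the only one to confront the fact that the utilities $e_t^{i}$ are not free parameters but are forced into $(0,1]$ by the closed forms of $f$, $g$, $h$ with their positive additive floors $\alpha$, $\beta$, $\gamma$; your rescaling by $\rho = 1/\max_j v_j$ and your fallback of realizing values through a single monotone component with $\rho$ small enough to dominate the cardinality-dependent offset are exactly the repairs the paper's argument silently needs. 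In short, same route, but your version closes gaps the published proof leaves open.
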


\begin{proof}
We aim to reduce our problem to the \textit{0-1 Knapsack Problem}: Give $n$ items $\{z_1, z_2, ..., z_n\}$ where $z_i$ has a value $v_i | v_i \ge 0$ and weight $w_i | w_i \ge 0$. The maximum weight that we can carry in the bag is $W$. $x_i = 0~or~1$ denotes if the item $z_i$ should be put into the bag. The \textit{0-1 Knapsack Problem} is to determine $\{x_1, x_2, ..., x_n\}$ to
\begin{align*}
\text{maximize~~}V = \sum_{i = 1}^{n} v_i x_i,\\
\text{subject~to}~\sum_{i = 1}^{n} w_i x_i \le W.
\end{align*}
Then we construct our participant selection problem as follows. Denote $\mathcal{P}_1 = \{p_1, p_2, ..., p_{n_1}\}$ as the set of participant candidates excluding the ones with the expected delay larger than $d_t$ or the bid price larger than $B_t$. We use $x_j = 0~or~1$ to denote if the candidate $p_j$ should be chosen to participate in the task $t$. Then the participant selection problem changes to determine $\{x_1, x_2, ..., x_{n_1}\}$ to
\begin{align*}
\text{maximize~~}E_t = \sum_{j = 1}^{n_1} e_t^{j} x_j,\\
\text{subject~to}~\sum_{j = 1}^{n_1} b_t^{j} x_j \le B_t.
\end{align*}

Therefore, the \textit{0-1 Knapsack Problem} is successfully reduced to the participant selection problem, which finishes the proof.
\end{proof}

Since the participant selection problem is proved as an NP-hard problem, the optimal solution can not be achieved by a polynomial time algorithm, but it can be obtained by a pseudo-polynomial time algorithm. We first make some modifications to our problem. Let $\mathcal{P}_1 = \{p_1, p_2, ..., p_{n_1}\}$ as the set of participant candidates. We map the utility set $\{e_t^1, e_t^2, ..., e_t^{n_1}\}$ into non-negative integers $\{Ie_t^1, Ie_t^2, ..., Ie_t^{n_1}\}$ by multiplying each of them by a amplification factor $\delta$. Define $Ie_t^{max}$ as the maximum value in the amplified utility set. For each $i \in \{1, ..., n\}$ and $k \in \{1, ..., \sum{Ie_t^i}\}$, we define $A[i, k] = \min \left \{ \sum_{j=1}^{i}b_t^{j}~|~\sum_{j=1}^{i}Ie_t^{j} = k\right \}$ and
\begin{asparaitem}
\item $A[i, k]$ is subset of $\mathcal{P}_1$ whose total utility is exactly $k$ and whose total payment is minimized;
\item $A[0, 0] = 0$ and for each $k \in \{1, ..., \sum{Ie_t^i}\}$, we have $A[0, k] = B_t + 1$.
\end{asparaitem}

Then, we can recursively calculate the $A[i+1, k]$ as
\begin{equation*}
A[i+1, k] = \begin{cases}
A[i, k], & \text{ if } Ie_t^{i+1} > k;\\
\min \{A[i, k], A[i, k-Ie_t^{i+1}]+b_t^{i}\}, & \text{ if } Ie_t^{i+1} \le k.
\end{cases}
\end{equation*}

Therefore, the optimal utility is $\max\{k~|~A[n_1, k] \le B_t\}$. We describe the pseudo code for the dynamic programming participant selection algorithm as Alg.~\ref{alg.1}.
\begin{algorithm}[!t]
\LinesNumbered
\DontPrintSemicolon
\SetKwData{Left}{left}\SetKwData{This}{this}\SetKwData{Up}{up}
\SetKwFunction{Union}{Union}\SetKwFunction{FindCompress}{FindCompress}
\SetKwInOut{Input}{input}\SetKwInOut{Output}{output}
	\Input{The participant candidate set $\mathcal{P}_1[1,...,n]$, the bid price set $b[1,...,n]$, and $B_t$, $TA_t$, $d_t$, $\{SA[1], ..., SA[n]\}$, $\{d_t[1], ..., d_t[1]\}$, $\{R[1], ..., R[n]\}$;}
	\Output{The selected participant set $P$, the optimal utility value $opt$;}
	\BlankLine
	\For{$i$ from $1$ to $n$}{
	         $e_t[i] \leftarrow w_s \cdot f(SA[i], TA_t) + w_d \cdot g(d_t[i], d_t) + w_r \cdot h(R[i])$;\;
		$Ie_t[i] \leftarrow e_t[i] \cdot \delta$;\;
	}
	$A[0, 0] \leftarrow 0$;\;
	\For{$k$ from $0$ to $B_t$}{
		$A[0, k] \leftarrow B_t + 1$;\;
	}
	\For{$i$ from $1$ to $n$}{
		\For{$j$ from $0$ to $\sum{Ie_t[i]}$}{
			\eIf{$Ie_t[i] \le j$ \&\& $A[i-1, j-Ie_t[i]] + b[i] < A[i-1, j]$}{
				$A[i, j] \leftarrow A[i-1, j-Ie_t[i]] + b[i]$;\;
				$z[i, j] \leftarrow 1$;\;
			}{
				$A[i, j] \leftarrow  A[i-1, j]$;\;
				$z[i, j] \leftarrow 0$;\;
			}
		}
	}
	$opt \leftarrow 0$;\;
	\For{$j$ from $0$ to $\sum{Ie_t[i]}$}{
		\If{$opt \le j$ \&\& $A[n, j] \le B_t$}{
			$opt \leftarrow j / \delta$;\;
		}
	}
	$BB \leftarrow opt \cdot \delta$;\;
	\For{$i$ from $n$ downto $1$}{
		\If{$z[i,BB] == 1$}{
			$P \leftarrow P + \mathcal{P}_1[i]$;\;
			$BB \leftarrow BB - b_t[i]$;
		}
	}
	\Return $P$ and $opt$;
	\caption{The Dynamic Programming Participant Selection Algorithm}
	\label{alg.1}
\end{algorithm}

The time complexity and space complexity of Alg.~\ref{alg.1} are $O(n\sum{Ie_t[i]}) \le O(n \cdot nIe_t^{max})$ and $O(n\sum{Ie_t[i]})$, where $Ie_t^{max}$ is the maximum value in $Ie_t[0, ..., n]$. However, if we use 1-dimensional array $A[0, ..., \sum{Ie_t[i]}]$ to store the current optimal values and pass over this array $i+1$ time, recalculating from $A[\sum{Ie_t[i]}]$ to $A[0]$ every time, we can obtain the optimal value for only $O(\sum{Ie_t[i]})$ space. However, Alg.~\ref{alg.1} is not a fully polynomial algorithm for participant selection. Since $\sum{Ie_t[i]}$ is not polynomial in the length of the input of the problem, we consider the algorithm is efficient only if $\sum{Ie_t[i]}$ is small or polynomial in $n$.   

To reduce the time complexity of Alg.\ref{alg.1}, we propose a \textit{fully polynomial time approximation scheme} (FPTAS)~\cite{add_FPATS} to select participants for the task crowdsourcing. The basic idea of the FPTAS is to ignore a certain number of least significant bits of the utility, depending on the error parameter $\epsilon $. Such that, the modified utilities can be viewed as numbers bounded by a polynomial in $n$ and $1/\epsilon$. The pseudo code for the FPTAS is described as Alg.~\ref{alg.2}.
\begin{algorithm}[!t]
\LinesNumbered
\DontPrintSemicolon
\SetKwData{Left}{left}\SetKwData{This}{this}\SetKwData{Up}{up}
\SetKwFunction{Union}{Union}\SetKwFunction{FindCompress}{FindCompress}
\SetKwInOut{Input}{input}\SetKwInOut{Output}{output}
	\Input{The approximation error $\epsilon > 0$, and the same inputs as Alg.~\ref{alg.1};}
	\Output{The selected participant set $P^{'}$, the approximated optimal utility value $s$;}
	\BlankLine	
	\For{$i$ from $1$ to $n$}{
		$Ie_t[i] \leftarrow e_t[i] \cdot \delta$;\;
	}
	Find the maximum value $Ie_t^{max}$ from $Ie_t[1, ..., n]$;\;
	$Q \leftarrow \dfrac{\epsilon \cdot Ie_t^{max}}{n}$;\;
	\For{$i$ from $1$ to $n$}{
		$Ie_t^{'}[i] \leftarrow \left \lfloor \dfrac{Ie_t[i]}{Q} \right \rfloor$;\;
	}
	Compute the selected participant set $P^{'}$ and the approximated optimal utility $s$ with Alg.~\ref{alg.1} using the $Ie_t^{'}[1, ..., n]$ as the utility input amplified to the integers;\;
	\Return $P^{'}$ and $s$;\;
	\caption{participant selection - FPTAS}
	\label{alg.2}
\end{algorithm}
	
\subsection{Report Assessment and Rewarding Scheme}
According to the participant selection scheme, the optimal expected utility can be achieved from the task crowdsourcing. However, due to the selfishness of the participants and uncertainties, the quality of the sensing reports should be evaluated to determine the trustworthiness and value of the sensing reports. There has been a large number of research efforts on the data quality assessment in the field of data mining~\cite{31,32}. In SACRM, we particularly focus on the two metrics, the veracity and actual delay of the sensing reports.

Generally, a sensing task is outsourced to multiple participants in mobile sensing to ensure the veracity of sensing reports. Similar sensing reports are mutually supportive to each other, while conflicting or inconsistent reports compromise the veracity of each other. Therefore, we can evaluate the veracity of the sensing report based on the amount of supports and conflicts it obtains from other sensing reports. We group all the reports for a specific sensing task $t$ into a collection $C_t$ and measure the data similarity for each report based on the similarity function.

Assume that the similarity score $S(sr_t^{i}, sr_t^{j})$ of any two sensing reports $sr_t^{i}$ and $sr_t^{j}$ in $C_t$ ranges from -1 to 1~\cite{1,31,32}, where -1 means completely conflicting and 1 means exactly consistent. Notably, since the similarity function design has been widely studied~\cite{1,33,34}, our focus is on how to utilize the similarity scores determined by the similarity function to evaluate the veracity of the report. We define the report veracity assessment $\Phi_t^{i}$ of each sensing report $sr_t^{i}$ in $C_t$ as
\begin{equation*}
\Phi_t^{i} = \dfrac{1 + \sum_{i, j \in C_t, i \neq j}S(sr_t^{i}, sr_t^{j}) \cdot e^{-\frac{1}{|C_t|}}}{2 \cdot (|C_t| -1)},
\end{equation*}
where $|C_t|$ is the number of sensing reports in the collection $C_t$. The $e^{-\frac{1}{|C_t|}}$ indicates the influence of the similarity score is reduced with the decrease of the number of the sensing reports.

The actual report delay is another metric for the report quality assessment. Since the expected delay is considered in participant selection as a crucial factor, the deviation between the actual report delay and the expected delay should be introduced into the report assessment to evaluate the timeliness. If the actual delay of $u_i$ is much larger than the expected delay, it would cause a negative impact on the report assessment. Therefore, if we denote $ad_t^{i}$ as the actual delay of the participant $i$ for the task $t$, the delay deviation assessment $\zeta_t^{i}$ can be defined as
\begin{equation*}
\zeta_t^{i} = \begin{cases}
~~1, &\text{if}~ad_t^{i} \le d_t^{i}+\sigma_t; \\
1 - \vartheta \cdot (1-e^{\left ( \frac{d_t^{i}+\sigma_t-ad_t^{i}}{d_t-d_t^{i}-\sigma_t} \right ) \cdot \varphi_1}), &\text{if}~d_t^{i}+\sigma_t< ad_t^{i} \le d_t.
\end{cases}
\end{equation*}
where $\sigma_t$ is a delay adjustment factor for the task $t$; $\varphi_1$ is an amplification factor to amplify the effect of $\left ( \frac{d_t^{i}+\sigma_t-ad_t^{i}}{d_t-d_t^{i}-\sigma_t} \right ) $ on the delay assessment, for instance, we can set $\varphi_1 = 5$;$1-\vartheta(1-1/e)$ as the lower bound of the delay deviation if $ad_t^{i} \le d_t$, and $0 \le \sigma_t \le d_t - \max\{d_t^{i}\}$, $0 < \vartheta \le 1$. Since $\vartheta$ can be any real number between 0 and 1, the delay assessment function can be adaptive for different application requirements by adjusting $\vartheta$. Meanwhile, we consider the delay assessment score should decrease with the increment of the delay deviation between the actual delay and the expected delay. Therefore, we use $1 - \vartheta \cdot (1-e^{\left ( \frac{d_t^{i}+\sigma_t-ad_t^{i}}{d_t-d_t^{i}-\sigma_t} \right ) \cdot \varphi_1})$ to rate the delay deviation score. Although the situation of $ad_t^{i} > d_t$ is not defined here, the sensing reports with such situation should be identified as invalid for the task.

Based on the discussed two metrics above, we can integrate them into a final report assessment function $v_t^{i}$ as follows.
\begin{equation}
v_t^{i} =\mathbf{\Gamma}(SR_t, ad_t^{i}, d_t^{i}, d_t)= \begin{cases}
w_x \cdot \Phi_t^{i} + (1-w_x) \cdot \zeta_t^{i}, &\text{if}~ad_t^{i} \le d_t; \\
~~0, &\text{if}~ad_t^{i} > d_t. \\
\end{cases}
\label{eq.assessment}
\end{equation}
where $w_x$ is the weight of report veracity in the report assessment and $0 \le w_x \le 1$. Note that, since both of the $\Phi_t^{i}$ and $\zeta_t^{i}$ range from 0 to 1, the $v_t^{i}$ should range from 0 to 1 too. With the weighted integration of report veracity and delay deviation, the report assessment can be adjusted to fit for the various task requirements.

With the report assessment defined above, we dynamically allocate the rewards to the participants. Denote the assessment result of the sensing report $sr_t^{i}$ as $v_t^{i}$. We define the reward allocation function $tp_t^{i}$ as
\begin{equation}
tp_t^{j} = \mathbf{\Delta}(v_t^{i}, b_t^{i}) =\begin{cases}
b_t^{i} \cdot e^{(v_t^{i}-v_t(h))\cdot \varphi_2}, &\text{if}~v_t^{i} < v_t(h); \\
~~b_t^{i}, &\text{otherwise}. \\
\end{cases}
\label{eq.rewards}
\end{equation}
where the $v_t(h)$ is the threshold of the report assessment result determined by the task requirement, and $0 < v_t(h) < 1$; $\varphi_2$ is an amplification factor to amplify the effect of the $v_t^{i}-v_t(h)$ on the reward allocation, for instance, we can set $\varphi_2 = 2$. If the report assessment value is not less than $v_t(h)$, the sensing report is identified as a good sensing report and the reward for $u_i$ should be his bid price; otherwise, the sensing report would be identified as a poor sensing report and the reward for the participant $i$ would be reduced with the decrease of the $v_t^{i}$.

Based on the Eq.~(\ref{eq.assessment}), (\ref{eq.rewards}), we describe the report assessment and rewarding scheme by Alg.~\ref{alg.3}.
\begin{algorithm}[!t]
\LinesNumbered
\DontPrintSemicolon
\SetKwData{Left}{left}\SetKwData{This}{this}\SetKwData{Up}{up}
\SetKwFunction{Union}{Union}\SetKwFunction{FindCompress}{FindCompress}
\SetKwInOut{Input}{input}\SetKwInOut{Output}{output}
	\Input{The sensing report set $\{sr_t[1], ..., sr_t[n]\}$, the actual delay set $\{ad_t[1], ..., ad_t[n]\}$ and other inputs same as Alg.~\ref{alg.1};}
	\Output{The report assessment result set $\{v_t[1], ..., v_t[n]\}$, and the allocated reward set $\{rp_t[1], ..., rp_t[n]\}$;}
	\BlankLine	
	\For{$i$ from $1$ to $n$}{
		$v_t[i] \leftarrow \mathbf{\Gamma}(SR_t, ad_t^{i}, d_t^{i}, d_t)$;\;
		$rp_t[i] \leftarrow \mathbf{\Delta}(v_t^{i}, b_t^{i})$;\;
		Allocate the reward $rp_t[i]$ to $i$;\;
	}
	\Return $\{v_t[1], ..., v_t[n]\}$ and $\{rp_t[1], ..., rp_t[n]\}$;\;
	\caption{Report Assessment and Rewarding Scheme}
	\label{alg.3}
\end{algorithm}

\subsection{Reputation Management}
Due to the selfishness of individuals, participants are eager to obtain more benefits with fewer efforts in mobile crowdsensing. Furthermore, there might be some malicious mobile users, who maliciously participate a number of sensing tasks and submit bad sensing reports with a high probability to jeopardize the mobile crowdsensing system. Although the report assessment and rewarding scheme can economically punish poor sensing reports and stimulate the improvement of report quality, it is still crucial to establish a reputation system to provide a synthesized evaluation on the the past sensing reports sent by each participant. Different from the existing reputation systems which only focus on the trustworthiness evaluation of participants, we introduce the bid price into the reputation system to evaluate the cost performance ratio of participants. The participant finishing the sensing task with the same quality but with a lower bid price should be rated a higher reputation by the data requester. Consequently, the crowdsourcing cost and the quality of sensing reports would be improved by the double stimulation (i.e., the trustworthiness and bid price evaluation in our reputation system).

Denote the participant set by $P$, and $v_t^{i}$ as the assessment result of the sensing report $sr_t^{i}$, and $b_t^{i}$ as the bid price of the participant $u_i$. Therefore, we define the reputation evaluation function $r_t^{i}$ as
\begin{equation}
r_t^{i} = \mathbf{\Lambda}(v_t^{i}, b_t^{i}) =\begin{cases}
\kappa \cdot (1-e^{\frac{-v_t^{i}/\sum_{i \in P}v_t^{i}}{b_t^{i}/\sum_{i \in P}b_t^{i}}}), &\text{if}~v_t^{i} \ge v_t(h); \\
~~-\eta, &\text{otherwise}. \\
\end{cases}
\label{eq.reputation}
\end{equation}
where $\kappa$ is a reward factor and $\eta$ is a punishment factor. To stimulate a better report quality, the reputation evaluation function should be defined asymmetrically, which means $\eta \gg \kappa$. We adopt a factor $1-e^{\frac{v_t^{i}/\sum_{i \in P}v_t^{i}}{b_t^{i}/\sum_{i \in P}b_t^{i}}}$
 to simulate a higher cost performance ratio.

Denote the final reputation of the participant $u_i$ in the platform is $R(u_i)$, and the maximum and minimum reputation values are $R_m$ and $R_s$. Then, we can integrate the reputation as
\begin{equation}
R(u_i) = \begin{cases}
R_s , &\text{if}~R(i) + r_t^{i}  < R_s; \\
R_m , &\text{if}~R(i) + r_t^{i}  > R_m; \\
R(u_i) + r_t^{i} , &\text{otherwise}. \\
\end{cases}
\label{eq.integration}
\end{equation}

Based on the Eq.~(\ref{eq.reputation}) and (\ref{eq.integration}), we describe the reputation management scheme by Alg.~\ref{alg.4}.
\begin{algorithm}[!t]
\LinesNumbered
\DontPrintSemicolon
\SetKwData{Left}{left}\SetKwData{This}{this}\SetKwData{Up}{up}
\SetKwFunction{Union}{Union}\SetKwFunction{FindCompress}{FindCompress}
\SetKwInOut{Input}{input}\SetKwInOut{Output}{output}
	\Input{The report assessment result set $\{v_t[1], ..., v_t[n]\}$ and the bid price set $\{b_t[1], ..., b_t[n]\}$;}
	\Output{The reputation set $\{r_t[1], ..., r_t[n]\}$ for the task $t$;}
	\BlankLine	
	\For{$i$ from $1$ to $n$}{
		$r_t[i] \leftarrow \mathbf{\Lambda}(v_t^{i}, b_t^{i})$;\;
		Integrate the $r_t[i]$ into $u_i$'s final reputation $R(u_i)$ based on Eq.~(\ref{eq.integration});\;
	}
	\Return $\{r_t[1], ..., r_t[n]\}$;\;
	\caption{Reputation Management Scheme}
	\label{alg.4}
\end{algorithm}

\section{Performance Analysis}
\label{sec5}
In this section, we theoretically analyze the performance of the proposed participant selection algorithms.

\begin{lem}
The time complexity and space complexity of Alg.~\ref{alg.2} are both $O(n^3\dfrac{1}{\epsilon})$.
\label{lem1}
\end{lem}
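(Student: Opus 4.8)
The plan is to trace the cost of Alg.~\ref{alg.2} line by line and reduce everything to the cost of the single invocation of Alg.~\ref{alg.1} on the rescaled utilities. First I would recall from the discussion preceding the lemma that Alg.~\ref{alg.1}, run on non-negative integer utilities $\{y_1,\dots,y_n\}$, uses $O\!\left(n\sum_{i} y_i\right)$ time and (with the one-dimensional array pass-over trick) $O\!\left(n\sum_i y_i\right)$ space, which is at most $O(n^2 y_{\max})$ with $y_{\max}=\max_i y_i$. Hence the whole question reduces to bounding $\sum_i Ie_t'[i]$, where $Ie_t'[i]$ are the rescaled utilities fed to Alg.~\ref{alg.1} inside Alg.~\ref{alg.2}.

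The key estimate is a direct consequence of the choice $Q=\epsilon\,Ie_t^{max}/n$ and $Ie_t'[i]=\left\lfloor Ie_t[i]/Q\right\rfloor$: for every candidate $i$,
\[
Ie_t'[i]\;\le\;\frac{Ie_t[i]}{Q}\;\le\;\frac{Ie_t^{max}}{Q}\;=\;\frac{n}{\epsilon}.
\]
Summing over the (at most $n$) candidates gives $\sum_{i} Ie_t'[i]\le n^2/\epsilon$. Substituting this into the complexity of Alg.~\ref{alg.1} yields time $O\!\left(n\cdot n^2/\epsilon\right)=O\!\left(n^3/\epsilon\right)$ and, likewise, space $O\!\left(n^3/\epsilon\right)$ for that call.

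Finally I would verify that every other line of Alg.~\ref{alg.2} is dominated by this term: computing each $e_t[i]$ and $Ie_t[i]$, scanning for $Ie_t^{max}$, computing $Q$, and computing each $Ie_t'[i]$ are all $O(n)$ in total, and the auxiliary arrays take $O(n)$ storage, so all of this is absorbed into $O(n^3/\epsilon)$. Combining the pieces shows that both the time complexity and the space complexity of Alg.~\ref{alg.2} are $O\!\left(n^3\dfrac{1}{\epsilon}\right)$, as claimed.

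I do not expect a genuine obstacle here; the only point that needs a clean statement is the per-item bound $Ie_t'[i]\le n/\epsilon$ (the floor only helps, so no rounding correction is needed) together with the observation that it is precisely $\sum_i Ie_t'[i]$ — not $\sum_i Ie_t[i]$ — that governs the pseudo-polynomial running time of Alg.~\ref{alg.1}. Once that substitution is made, the remainder is routine bookkeeping.
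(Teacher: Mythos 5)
Your proposal is correct and follows essentially the same route as the paper: rescale by $Q=\epsilon\,Ie_t^{max}/n$, observe that each rounded utility is at most $n/\epsilon$, and plug this into the $O(n\sum_i Ie_t'[i])\le O(n^2\,Ie_t^{max'})$ bound for Alg.~\ref{alg.1}. The only cosmetic slip is attributing $O(n\sum_i y_i)$ space to the one-dimensional pass-over trick (which actually gives $O(\sum_i y_i)$); this does not affect the stated $O(n^3\frac{1}{\epsilon})$ conclusion.
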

\begin{proof}
According to Alg.~\ref{alg.1}, the time complexity of dynamic programming is $O(n\sum{Ie_t[i]}) \le O(n \cdot nIe_t^{max})$.

In Alg.~\ref{alg.2}, we divide each utility by the factor $Q = \dfrac{\epsilon \cdot Ie_t^{max}}{n}$ to a new utility $Ie_t^{'}[i]$. Therefore, the time complexity of Alg.~\ref{alg.1} changes to $O(n \cdot nIe_t^{max'})$, where $Ie_t^{max'} = \left \lfloor \dfrac{Ie_t^{max}}{Q} \right \rfloor$. It means $O(n \cdot nIe_t^{max'}) = O(n \cdot n\left \lfloor \dfrac{Ie_t^{max}}{Q} \right \rfloor =  O(n^{2} \left \lfloor \dfrac{n}{\epsilon} \right \rfloor \le O(n^3\dfrac{1}{\epsilon})$.

Similarly, we can prove the space complexity of Alg.~\ref{alg.2} changes to $O(n^3\dfrac{1}{\epsilon})$.
\end{proof}

\begin{lem}
Denote $s$ as the output utility of Alg.~\ref{alg.2}, and $opt$ as the optimal utility. Then, we have $s \ge (1-\epsilon) opt$.
\label{lem2}
\end{lem}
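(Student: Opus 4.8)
The plan is to transport the textbook analysis of the knapsack FPTAS to our setting. First I would fix notation: let $O\subseteq\mathcal{P}_1$ be an optimal feasible participant set, i.e. one maximising $\sum_i e_t^{i}$ subject to~(\ref{eq.subject}), and write $opt'=\delta\cdot opt=\sum_{i\in O}Ie_t^{i}$ for its amplified total utility; similarly let $P'$ be the set returned by Alg.~\ref{alg.2} and $s'=\delta\cdot s=\sum_{i\in P'}Ie_t^{i}$ its amplified utility, so that it suffices to prove $s'\ge(1-\epsilon)\,opt'$. The single observation on which everything rests is that replacing the objective coefficients $Ie_t^{i}$ by the truncated ones $Ie_t^{\prime i}=\lfloor Ie_t^{i}/Q\rfloor$ changes \emph{only the objective} and not the constraints $\sum_i b_t^{i}\le B_t$, $d_t^{i}\le d_t$, $b_t^{i}\le B_t$; hence $O$ is feasible for the scaled instance that Alg.~\ref{alg.2} hands to Alg.~\ref{alg.1}, while $P'$ --- being an \emph{exact} optimum of that scaled instance, since Alg.~\ref{alg.1} is exact --- satisfies $\sum_{i\in P'}Ie_t^{\prime i}\ge\sum_{i\in O}Ie_t^{\prime i}$.

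Next I would pass back and forth through the floor. For each $i$ we have the two-sided estimate $Q\,Ie_t^{\prime i}\le Ie_t^{i}<Q\,Ie_t^{\prime i}+Q$. Summing the left inequality over $P'$, then invoking optimality of $P'$ on the scaled instance, then the right inequality over $O$, and finally using $|O|\le n$ together with $nQ=\epsilon\,Ie_t^{max}$, I obtain the chain
\[
s'=\sum_{i\in P'}Ie_t^{i}\ \ge\ Q\sum_{i\in P'}Ie_t^{\prime i}\ \ge\ Q\sum_{i\in O}Ie_t^{\prime i}\ >\ \sum_{i\in O}\bigl(Ie_t^{i}-Q\bigr)\ =\ opt'-|O|\,Q\ \ge\ opt'-\epsilon\,Ie_t^{max}.
\]

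It then remains to turn this additive slack into a multiplicative one, which is where the structure of $\mathcal{P}_1$ enters: since $\mathcal{P}_1$ is obtained \emph{after} deleting every applicant whose bid exceeds $B_t$ or whose expected delay exceeds $d_t$, each individual candidate of $\mathcal{P}_1$ is by itself a feasible participant set, and in particular the one attaining $Ie_t^{max}$ is; therefore $opt'\ge Ie_t^{max}$. Plugging this in gives $s'>opt'-\epsilon\,opt'=(1-\epsilon)\,opt'$, and dividing by $\delta$ yields $s\ge(1-\epsilon)\,opt$. The only situation this argument does not literally cover, $Ie_t^{max}=0$ (all amplified utilities vanish, so $opt'=0$), makes the claim trivial and I would dismiss it in a line.

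I do not expect a real obstacle in the arithmetic; the parts that deserve care --- and that a hurried write-up would gloss over --- are the two feasibility facts: that the scaled and unscaled problems have the \emph{same} feasible region (so their optima are legitimately comparable and $P'$ really is pitted against the true optimum), and that the pre-filtering defining $\mathcal{P}_1$ is exactly what makes $opt'\ge Ie_t^{max}$ hold --- it is this inequality, not the rounding per se, that upgrades the error bound from additive $\epsilon\,Ie_t^{max}$ to multiplicative $\epsilon\,opt'$. A minor bookkeeping point is the amplification step $Ie_t^{i}\leftarrow e_t^{i}\cdot\delta$ in Alg.~\ref{alg.1}: I would either take $\delta$ chosen so that $Ie_t^{i}=\delta e_t^{i}$ is an exact integer, or carry an extra $\lfloor\cdot\rfloor$ and note its $O(1)$ per-item contribution is absorbed into the same estimate as $Q$.
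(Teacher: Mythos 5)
Your proposal is correct and follows essentially the same route as the paper's own proof: round each amplified utility down by $Q=\epsilon\, Ie_t^{max}/n$, use the exactness of the dynamic program on the scaled instance to compare against the true optimum, bound the total rounding loss by $nQ=\epsilon\, Ie_t^{max}$, and convert to a multiplicative bound via $opt'\ge Ie_t^{max}$ (which holds because every single pre-filtered candidate is feasible). Your write-up is in fact slightly more careful than the paper's — you keep the factors of $Q$ explicit in the chain of inequalities and dispose of the degenerate case $Ie_t^{max}=0$ — but the underlying argument is the same.
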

\begin{proof}
Let $P$ denote the optimal selected participant set, $P^{'}$ denote the selected participant set of Alg.~\ref{alg.2}. For any amplified utility $Ie_t[i]$, because of the rounding down, we have $Ie_t[i] / Q - Ie_t^{'}[i] \le 1$, where $Q = \dfrac{\epsilon \cdot Ie_t^{max}}{n}$. Therefore,
\begin{align*}
\sum_{i \in P}le_t[i] - Q \cdot \sum_{i \in P}Ie_t^{'}[i] \le nQ.
\end{align*}

The dynamic programming steps return the optimal selected participant set with the new utility set $\{Ie_t^{'}[1], ..., Ie_t^{'}[n]\}$. Therefore,
\begin{align*}
\sum_{i \in P^{'}}Ie_t[i] \ge \sum_{i \in P}Ie_t^{'}[i] \ge \sum_{i \in P}Ie_t[i] - nQ = \sum_{i \in P}le_t[i] - \epsilon Ie_t^{max}
\end{align*}

Since for each $i$ we have $b_t^{i} \le B_t$, the optimal utility should be not less than $Ie_t^{max} $, i.e., $\sum_{i \in P}le_t[i] \ge Ie_t^{max}$. Therefore,
\begin{align*}
\sum_{i \in P^{'}}le_t[i] \ge \sum_{i \in P}le_t[i] - \epsilon Ie_t^{max} \ge (1-\epsilon)\sum_{i \in P}le_t[i].
\end{align*}

Since $le_t[i] = e_t[i] \cdot \delta$ and $\sum_{i \in P^{'}}le_t[i] = s \cdot \delta$ and $\sum_{i \in P}le_t[i] = opt \cdot \delta$, we have $s \ge (1-\epsilon) opt$.
\end{proof}

\begin{thm}
Alg.~\ref{alg.2} is a fully polynomial time approximation scheme (FPTAS) for participant selection.
\label{thm2}
\end{thm}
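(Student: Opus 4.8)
The plan is to combine the two preceding lemmas. Recall that a fully polynomial time approximation scheme for a maximization problem is, by definition, an algorithm that for every fixed $\epsilon > 0$ returns a solution of value at least $(1-\epsilon)$ times the optimum, and whose running time is polynomial both in the size of the input instance and in $1/\epsilon$. So the proof is essentially a bookkeeping argument: verify that Alg.~\ref{alg.2} meets both halves of this definition, citing Lemma~\ref{lem2} for the approximation guarantee and Lemma~\ref{lem1} for the complexity bound.

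First I would recall the definition of an FPTAS explicitly, so the reader sees exactly the two conditions that must be checked. Second, I would invoke Lemma~\ref{lem2}: the output utility $s$ of Alg.~\ref{alg.2} satisfies $s \ge (1-\epsilon)\,opt$, which is precisely the required approximation ratio; moreover the returned set $P'$ is feasible because Alg.~\ref{alg.1} is run as a subroutine and only ever accepts candidate sets whose total bid price does not exceed $B_t$ (and the delay/bid constraints are already enforced in forming $\mathcal{P}_1$), so the solution is genuinely a valid participant set. Third, I would invoke Lemma~\ref{lem1}: both the time and space complexity of Alg.~\ref{alg.2} are $O(n^3/\epsilon)$, which is polynomial in $n$ (the number of candidate users, a proxy for input length) and in $1/\epsilon$. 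Together these two facts are exactly the defining properties of an FPTAS, which completes the proof.

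Honestly there is no real obstacle here — the theorem is a corollary packaging of Lemmas~\ref{lem1} and~\ref{lem2}, and the only thing to be careful about is making the logical structure match the textbook definition cleanly: state the definition, match the approximation clause to Lemma~\ref{lem2}, match the polynomial-runtime clause to Lemma~\ref{lem1}, and note feasibility of the returned set. If I wanted to be thorough I would also remark that the dependence on $\delta$ and on $Ie_t^{max}$ has already been absorbed into the $O(n^3/\epsilon)$ bound of Lemma~\ref{lem1} (since $Q = \epsilon\, Ie_t^{max}/n$ cancels the $Ie_t^{max}$ factor), so nothing instance-dependent beyond $n$ and $1/\epsilon$ remains — but strictly speaking that has already been done inside Lemma~\ref{lem1}'s proof, so a one-line pointer suffices.
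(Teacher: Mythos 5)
Your proposal is correct and takes essentially the same route as the paper: the paper's proof likewise recalls the definition of an FPTAS and cites Lemma~\ref{lem2} for the $(1-\epsilon)$ approximation guarantee and Lemma~\ref{lem1} for the polynomial dependence on $n$ and $1/\epsilon$. Your additional remarks on feasibility of the returned set and on the cancellation of $Ie_t^{max}$ via $Q$ are sound refinements but not a different argument.
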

\begin{proof}
An FPTAS is an algorithm that takes an instance of an optimization problem and a parameter $\epsilon > 0$ and in polynomial time (both polynomial in $n$ and $1/\epsilon$) produces a solution that is within a factor $(1 + \epsilon)$ of being optimal or ($(1 - \epsilon)$ for maximization problems). According to the Lemma~\ref{lem1} and Lemma~\ref{lem2}, we can prove Alg.~\ref{alg.2} is a FPTAS for participant selection.
\end{proof}

\section{Performance Evaluation}
\label{sec6}
We evaluate the performance of SACRM based on Java simulations. In our simulations, we setup 20 mobile users who are interested in the published sensing task. The sensing task has 10 interested social attributes and each mobile user has 10 social attributes. Therefore, the number of overlapping social attributes ranges from 0 to 10. Each mobile user submits a expected delay to the data requester. The expected delay ranges from 1 to 45. And the actual task delay follows a normal distribution where the expected value is the expected task delay.  The reputation of each user is randomly assigned, which ranges from 0.1 to 1. If the reputation of a mobile user $rp$ is below 0.3 is a dishonest user who submits bad sensing reports with a high probability $(1-rp)$. And the user with reputation $rp$ higher than 0.3 would submit bad sensing reports with a probability of $0.4 \times (1-rp)$. The submitted report is assessed as a quality score by our report assessment scheme. We use the sum of the assessed quality scores as the crowdsourcing utility, denoted as $utility$.

\subsection{Participant Selection Scheme Evaluation}
In SACRM, we consider social attributes, expected delay and reputation in our participant selection scheme. In this section, we evaluate the effect of each factor on crowdsourcing by comparing our SACRM scheme with the Greedy Algorithm (GA) where the participant with the lowest bid price has the priority to be selected. In order to highlight the effect of each factor on participant selection, the weights of the other two factors are set to very low values (e.g., 0.05) to play a minor role in the evaluation.

Fig.~\ref{fig.social} shows the crowdsourcing utility comparison between SACRM and GA. It is shown that both of the utility of the SACRM and GA increase with the increasing task budget. It is obvious that a higher task budget can recruit more participants and create a higher utility. However, since GA does not consider the social attributes, the utility of the SACRM is significantly higher than the utility of the GA under each task budget. It demonstrates that social attributes play a significant role in participant selection and SACRM greatly improves the crowdsourcing utility. Fig.~\ref{fig.delay} shows the total actual delay comparison between SACRM and GA. The total actual delay in this figure means the sum of the actual delays of the submitted sensing reports. Similar with the Fig.~\ref{fig.social}, a higher task budget indicates more participants and higher total actual delay. Nevertheless, the SACRM has a remarkably lower total actual delay than GA, which indicates an enhanced crowdsourcing utility. The effect of the reputation is evaluated in Fig.~\ref{fig.reputation}. In order to evaluate the effect of the distribution of reputation values on SACRM, we compare the performance of SACRM and GA under the random distribution (RD) and normal distribution (ND). It can be seen that the SACRM considering the reputation in participant selection brings a significantly higher crowdsourcing utility than the GA, both in RD and ND. Meanwhile, since malicious mobile users (i.e., the mobile users with low reputation values) are less in ND than in RD, the utility of SACRM in ND is lightly higher than in RD. This simulation result also proves that the distribution of reputation values has little impact on the performance of SACRM. Therefore, combining Fig.~\ref{fig.social}, \ref{fig.delay}, \ref{fig.reputation}, it can be demonstrated that social attributes, expected delay and reputation are necessary to be considered in participant selection and SACRM leads to a significantly improved crowdsourcing utility.
\begin{figure}[!t]
\includegraphics[width=0.42\textwidth]{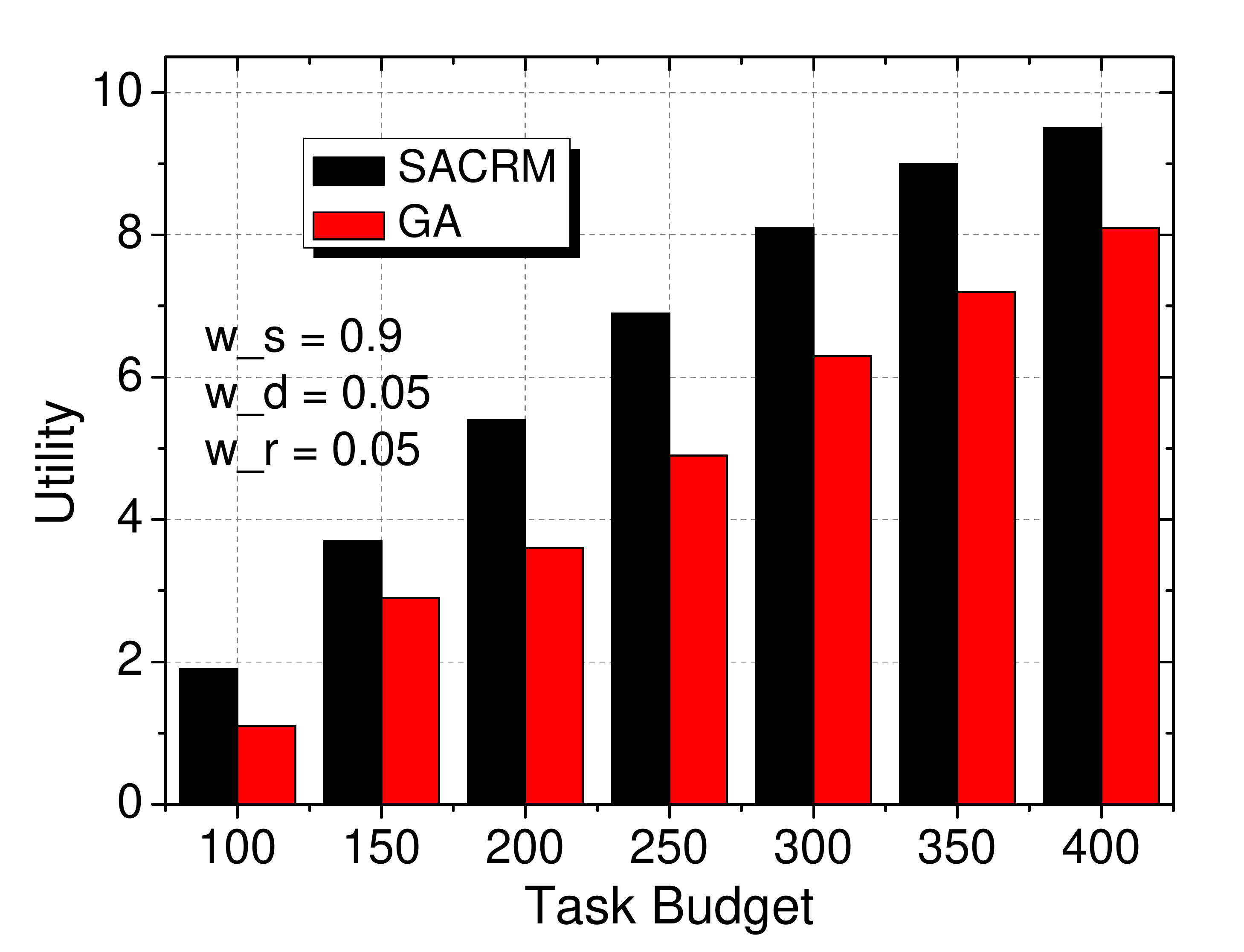}
\caption{The Effect of Social Attributes on Crowdsourcing.}
\label{fig.social}
\end{figure}
\begin{figure}[!t]
\includegraphics[width=0.42\textwidth]{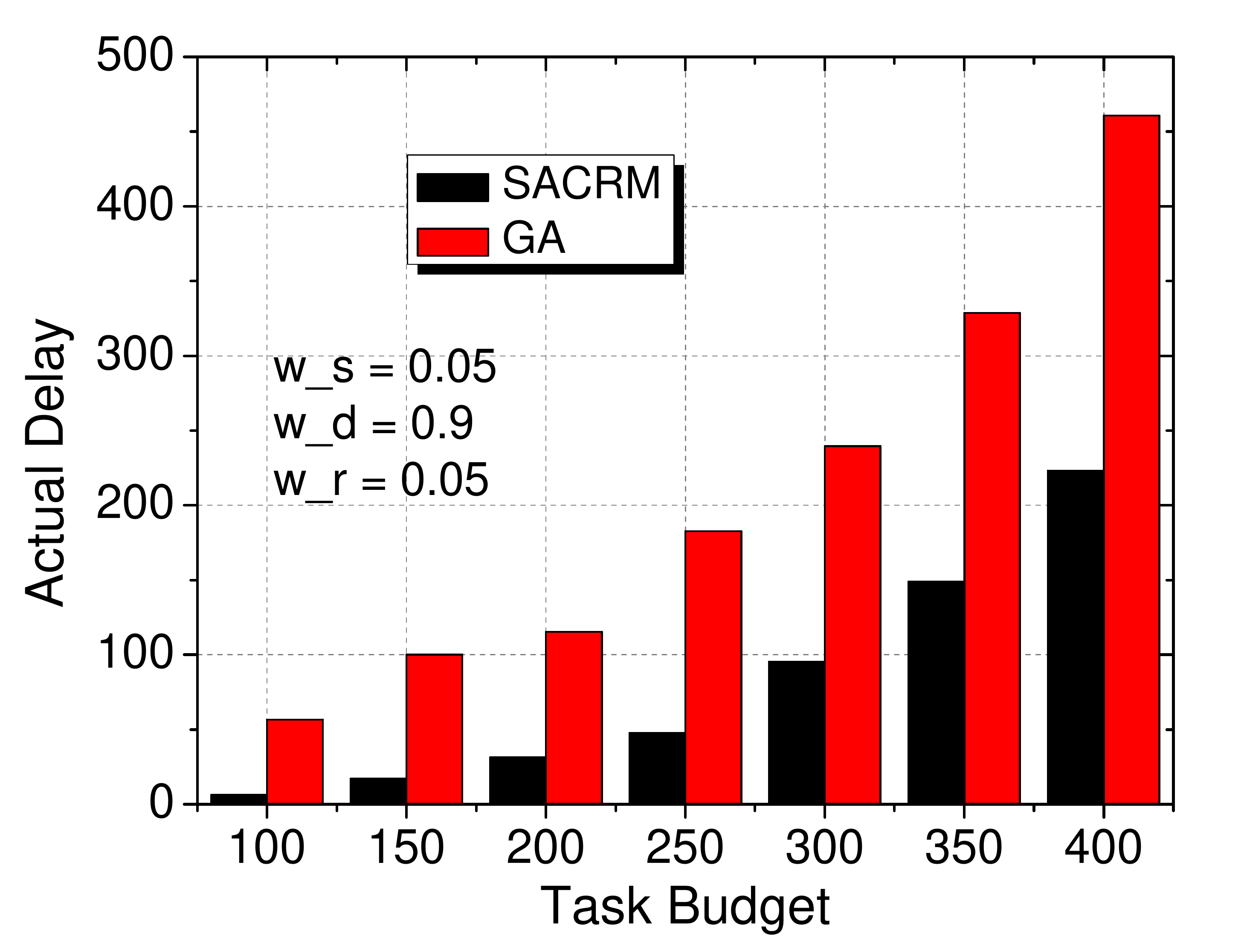}
\caption{The Effect of Expected Delay on Crowdsourcing.}
\label{fig.delay}
\end{figure}
\begin{figure}[!t]
\includegraphics[width=0.42\textwidth]{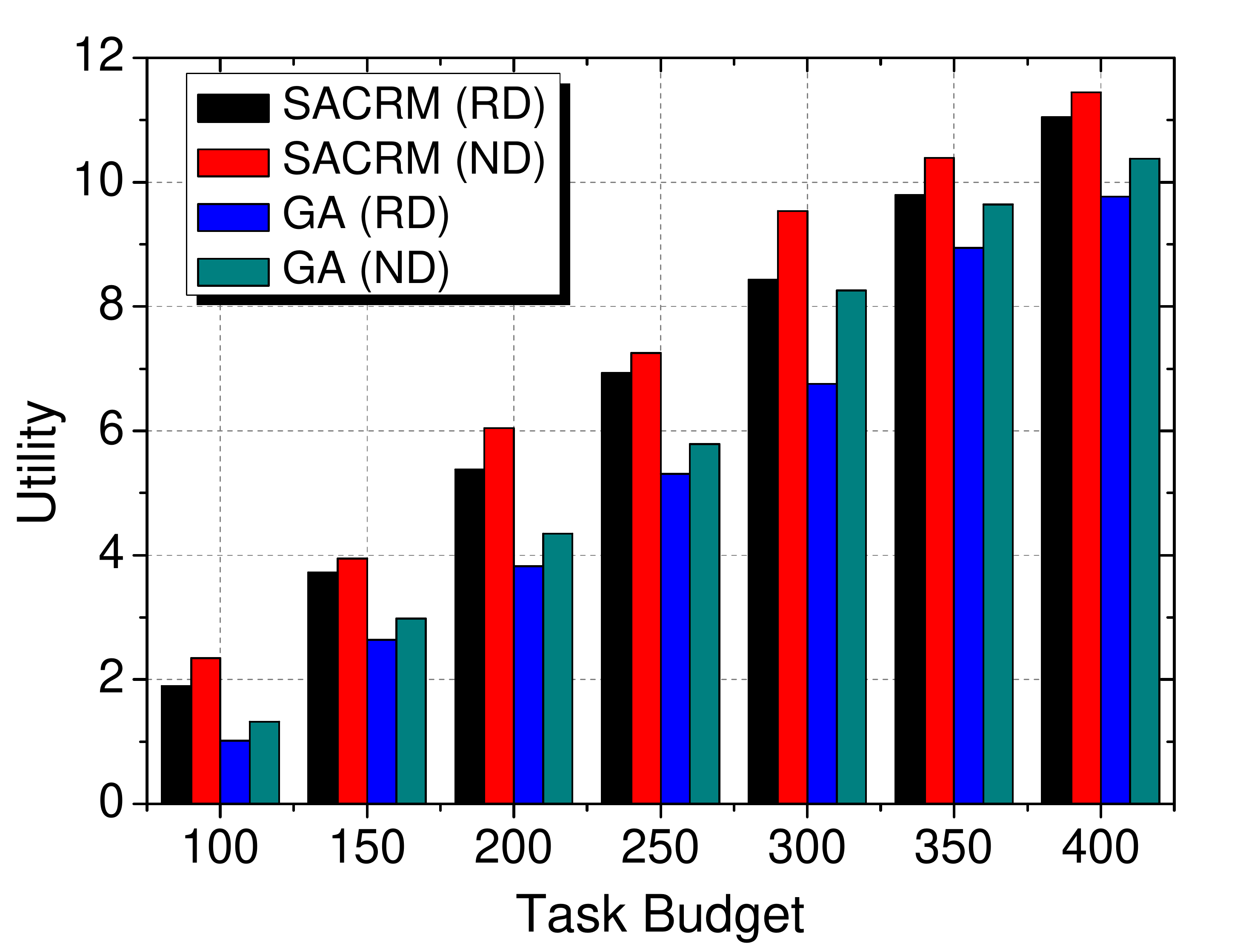}
\caption{The Effect of Reputation on Crowdsourcing\protect\footnotemark[2].}
\label{fig.reputation}
\end{figure}
\footnotetext[2]{RD means the reputation values of mobile users are followed by random distribution, while ND means the reputation values of mobile users are followed by normal distribution.}

Fig.~\ref{fig.scalability} shows the performance comparison between SACRM and GA under the increasing mobile users. As shown in this figure, when the task budget is fixed, SACRM can produce an increasing utility with a larger number of mobile users, while the utility of GA fluctuates with the increment of mobile users. It indicates that SACRM is scalable and can always achieve an optimized utility by selecting the well-suited participants from a set of mobile users. With the increment of mobile users, we can have more choices to find the best-suited participants for our task. That is why the expected utility increases with the increasing mobile users. However, GA always chooses the mobile user with lowest bid price for the task without considering the underlying relationship, which causes a low and fluctuating utility curve.
\begin{figure}[!t]
\includegraphics[width=0.42\textwidth]{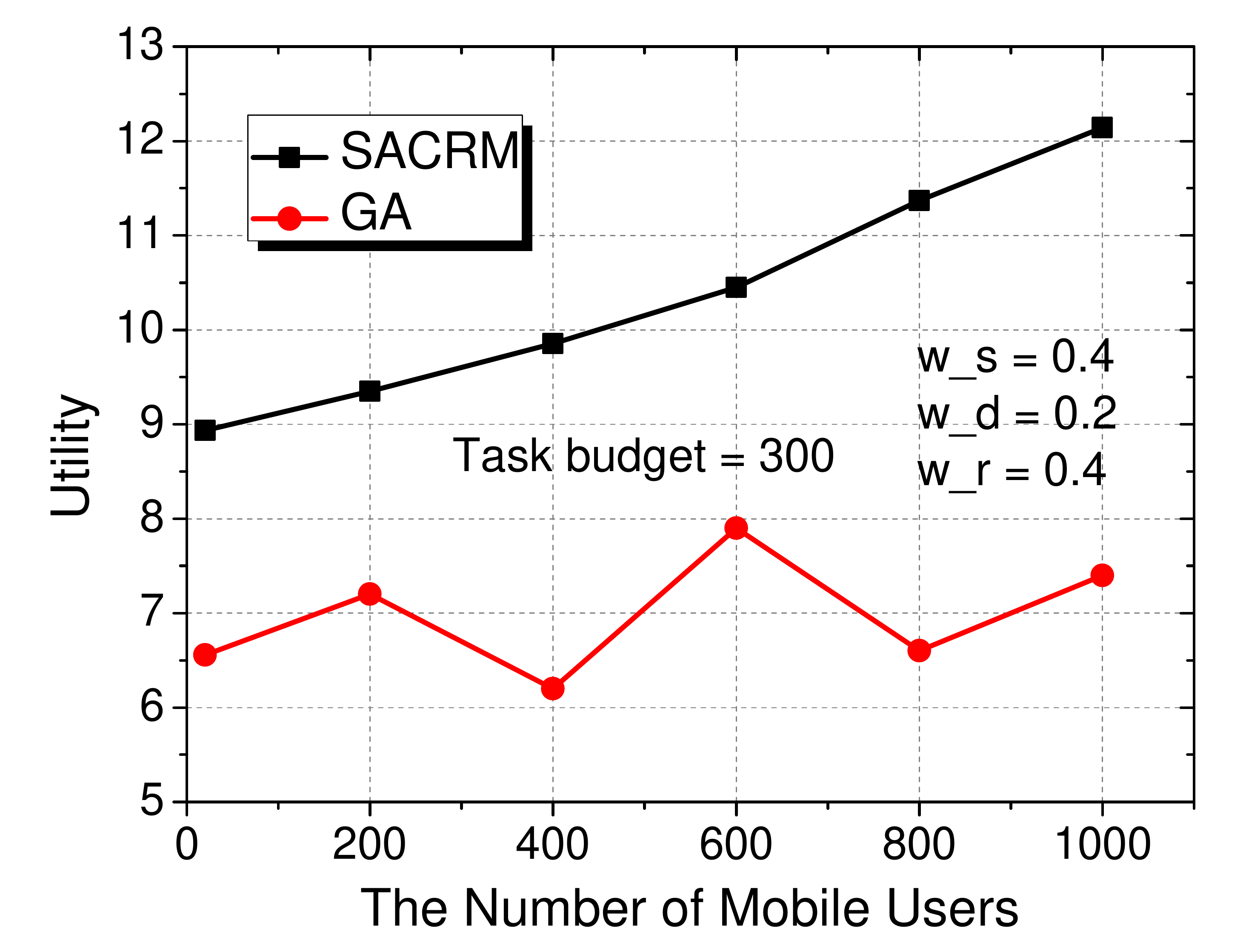}
\caption{The Scalability of SACRM.}
\label{fig.scalability}
\end{figure}

\subsection{Rewarding and Reputation Management Scheme Evaluation}
In this section, we evaluate the performance of the dynamic rewarding scheme and the reputation system in SACRM. In dynamic rewarding scheme, we set there is a participant $k$, and we have $d_t^{i} = 20$, $d_t = 40$, $w_x = 60\%$ and $b_t^{i} = 1000$. In reputation management scheme, we set the threshold of the report quality score to 0.35.

Fig.~\ref{fig.rewards_veracity} shows the allocated reward comparison under different report veracity scores. It can be seen that the allocated reward under the lower actual task delay is higher than the other one when the veracity score lies in $[0, 0.6]$. Moreover, the allocated reward increases with the increasing veracity score until it reaches the bid price. It indicates that a higher actual delay means a lower reward with a high probability under the same veracity score, and a higher veracity score brings a higher reward in most cases. Fig.~\ref{fig.rewards_delay} depicts the allocated reward comparison under different actual task delays. The change trend and the meaning of this figure are similar with these of Fig.~\ref{fig.rewards_veracity}. Note that, when the actual delay is larger than 40, the sensing report is deemed to be invalid and hence the allocated reward drops to 0. Fig.~\ref{fig.rewards_veracityanddelay} is a 3D figure that shows the reward change trend with the change of the veracity score and the actual delay of the sensing report. It indicates that the task rewards are adaptively allocated based on the report quality, which would economically stimulate the participant to improve their report quality, including the report veracity and the report delay.

\begin{figure}[!t]
\includegraphics[width=0.42\textwidth]{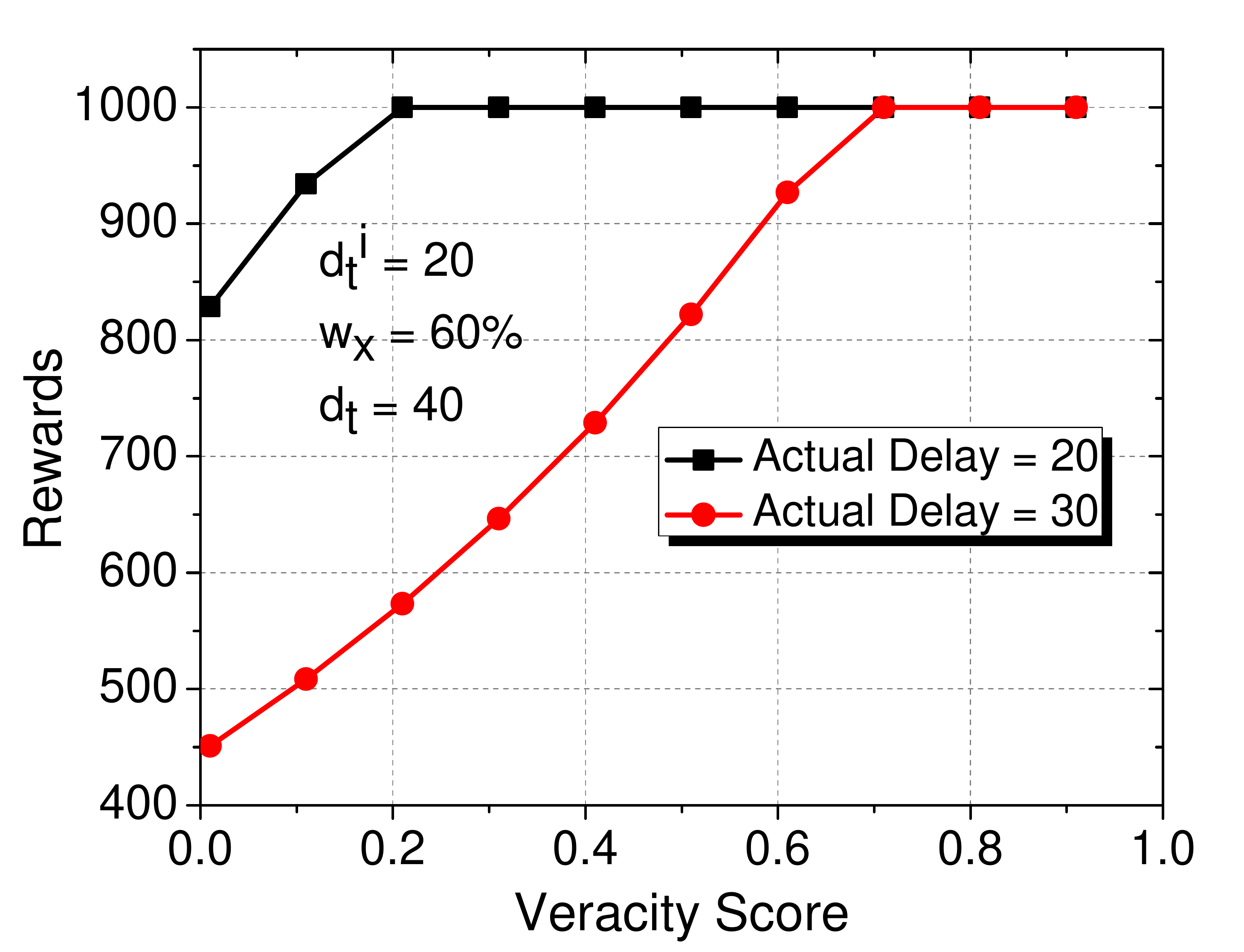}
\caption{Rewards v.s. Report Veracity.}
\label{fig.rewards_veracity}
\end{figure}
\begin{figure}[!t]
\includegraphics[width=0.42\textwidth]{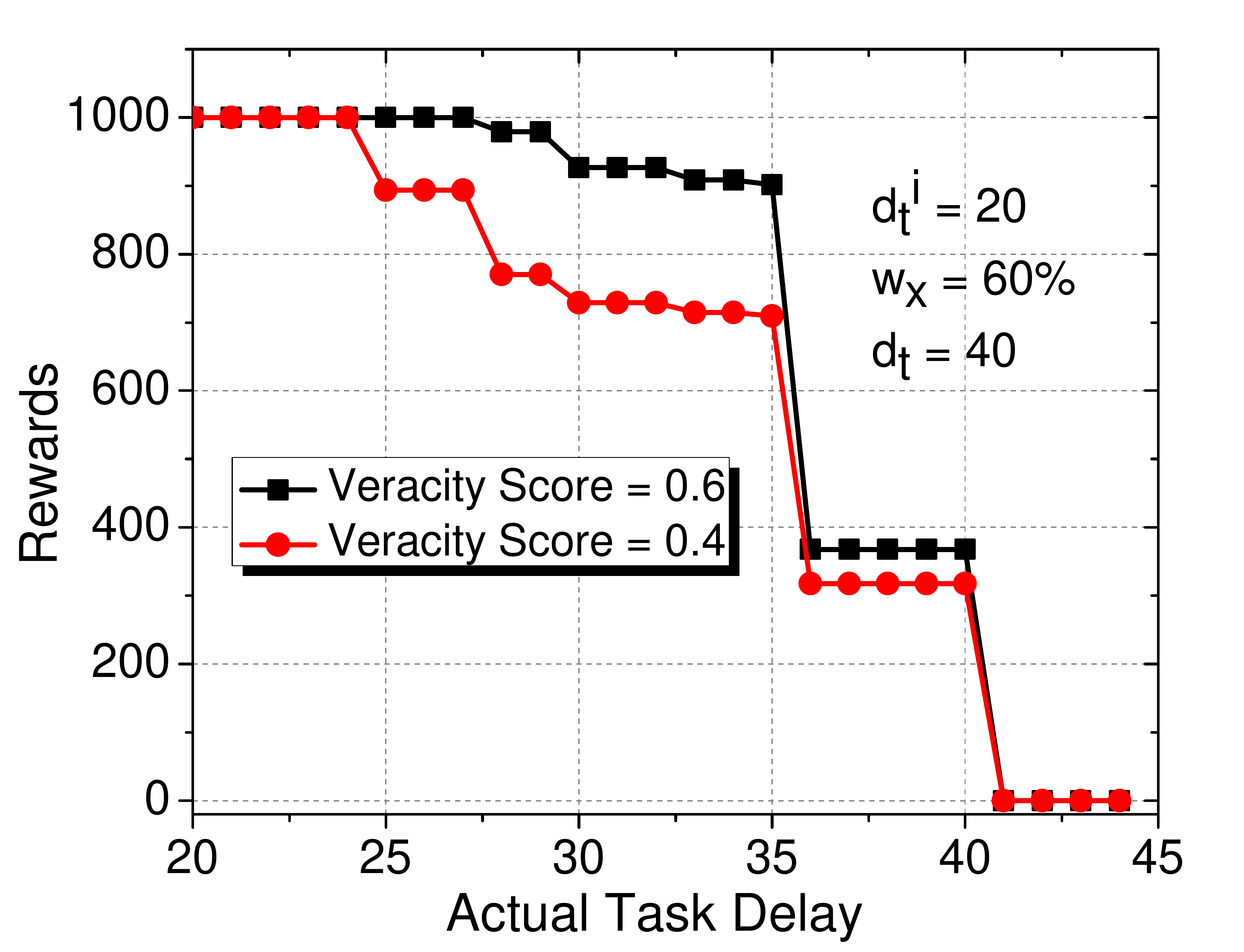}
\caption{Rewards v.s. Acutal Task Delay.}
\label{fig.rewards_delay}
\end{figure}
\begin{figure}[!t]
\includegraphics[width=0.42\textwidth]{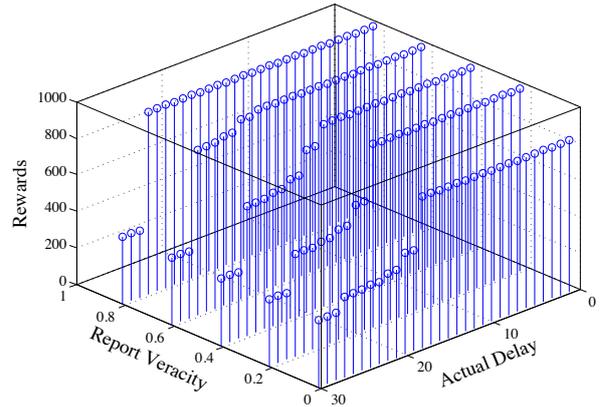}
\caption{Rewards v.s. Report Veracity and Acutal Task Delay.}
\label{fig.rewards_veracityanddelay}
\end{figure}

Fig.~\ref{fig.reputation_quality} shows the evaluated reputation value comparison under different report quality scores. It can be seen that the evaluate reputation value are $-200$ in both BPPs when the report quality score is below $0.35$. Here, BPP means the percentage of the bid price taking in the sum of all the bid prices. And the reputation value changes to be positive and increases smoothly with the increment of report quality score when the report quality score is higher than 0.35. Furthermore, at the same report quality, the participant with a lower bid price will get a higher evaluated reputation value. Fig.~\ref{fig.reputation_bidprice} depicts the evaluated reputation value comparison under different bid prices. Since the quality scores of the curves are higher than the report quality threshold, the reputation values here are both positive. However, it can be shown that when the bid price is the same, a higher report quality score obtains a relatively higher reputation value and when the quality score is the same, the reputation decreases with the increasing bid price. Fig.~\ref{fig.reputation_qualityandbidprice} is a 3D figure that shows the change trend of the reputation under the change of the participant's bid price and his report quality score. It indicates that the participants would be stimulated to increase their reputation values by improving the report quality and competitively reducing their bid prices.
\begin{figure}[!t]
\includegraphics[width=0.42\textwidth]{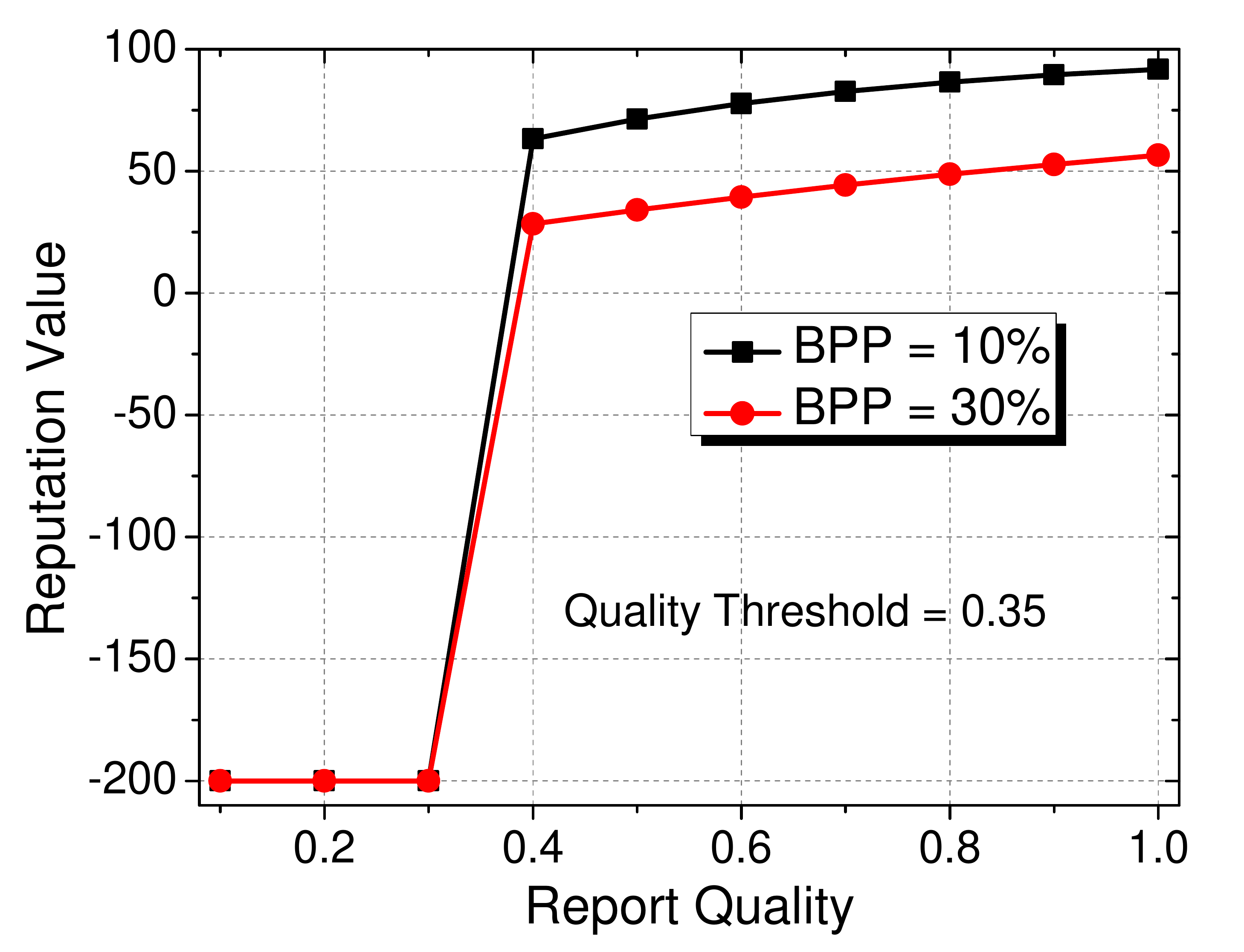}
\caption{Reputation v.s. Report Quality. (BPP means the percentage of the participant's bid price taking in the sum of all the bid prices.)}
\label{fig.reputation_quality}
\end{figure}
\begin{figure}[!t]
\includegraphics[width=0.42\textwidth]{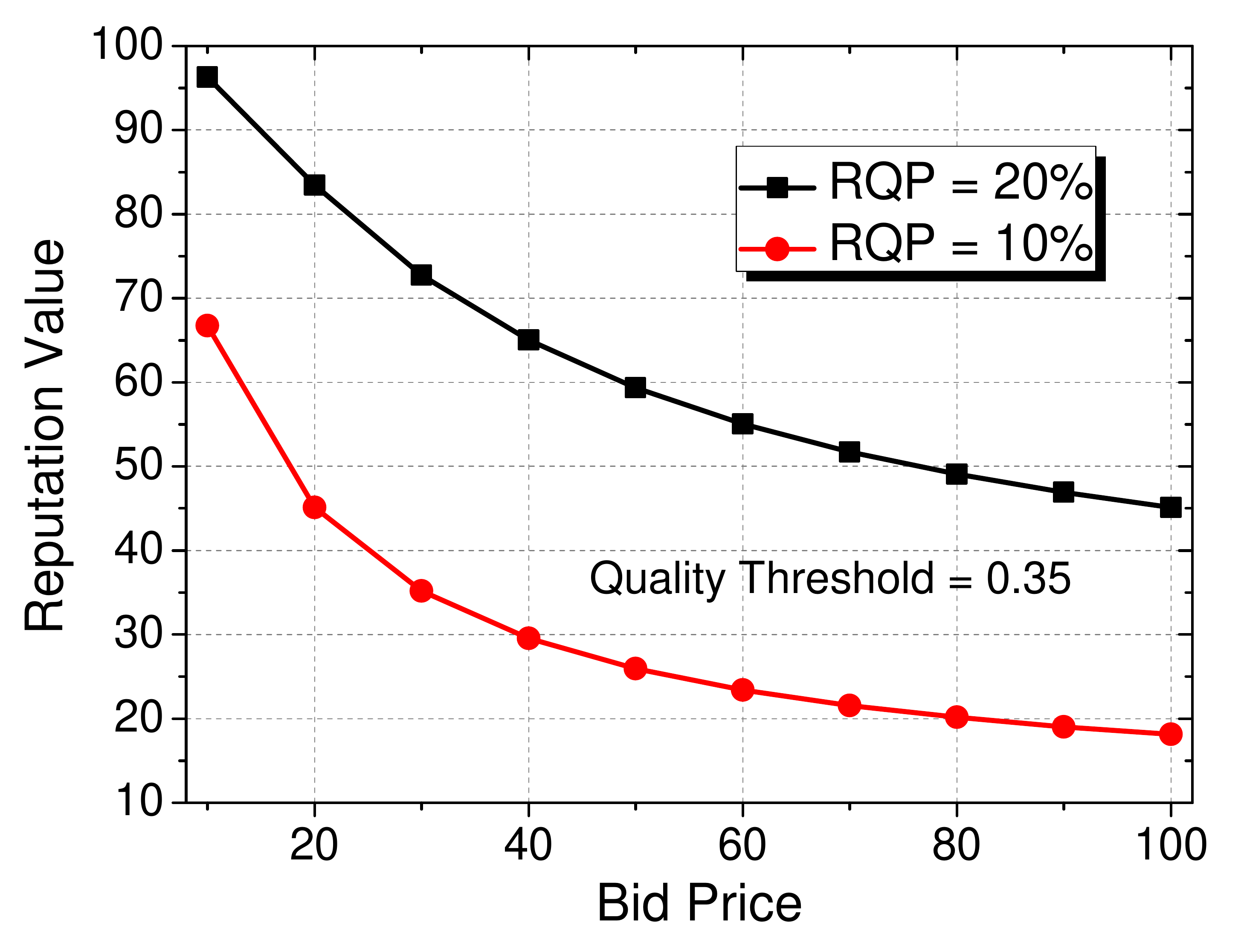}
\caption{Reputation v.s. Bidprice. (RQP means the percentage of the participant's report quality score taking in the sum of all the report quality scores.)}
\label{fig.reputation_bidprice}
\end{figure}
\begin{figure}[!t]
\includegraphics[width=0.42\textwidth]{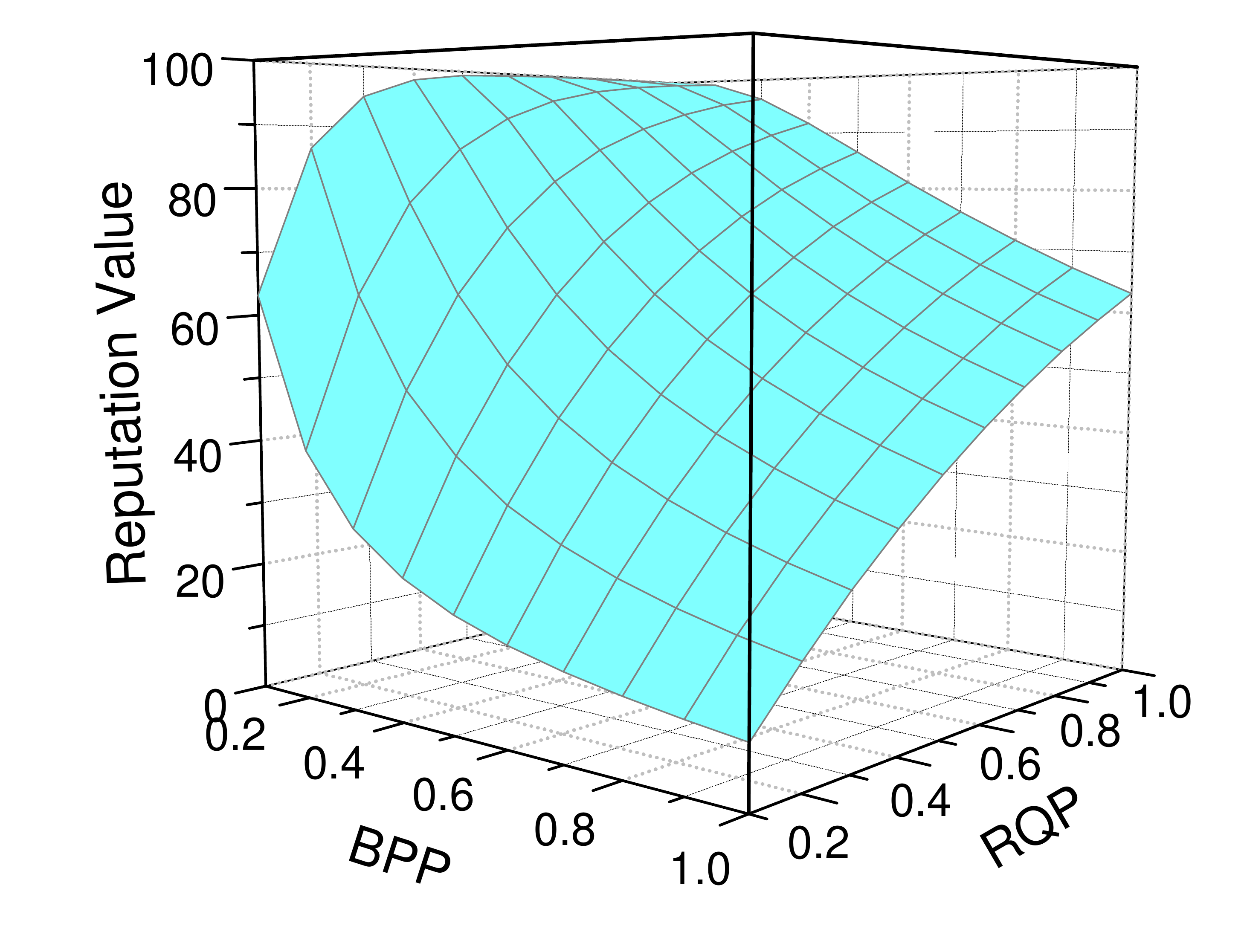}
\caption{Reputation v.s. Report Quality and Bidprice.}
\label{fig.reputation_qualityandbidprice}
\end{figure}

\section{Conclusion}
\label{sec7}
In this paper, we have proposed a \textbf{\underline{S}}ocial  \textbf{\underline{A}}ware \textbf{\underline{C}}rowdsourcing with \textbf{\underline{R}}eputation \textbf{\underline{M}}anagement (SACRM) scheme to select the well-suited participants and allocate the task rewards in mobile sensing. In participant selection, the proposed algorithms efficiently select the well-suited participants for the sensing tasks and maximize the crowdsourcing utility. Furthermore, the report quality is evaluated by the proposed report assessment scheme, and the participants are economically stimulated to improve their sensing report quality by our rewarding scheme. In addition, the proposed reputation management scheme reduces the crowdsourcing cost by introducing the cost performance ratio of the participant in reputation evaluation. Theoretical analysis and extensive simulations demonstrate that the SACRM scheme can significantly improve the crowdsourcing utility, and is effective in stimulating the participants to improve their report quality and reducing the crowdsourcing cost. In our future work, we will investigate the privacy and security issues of the SACRM scheme.

\section*{Acknowledgment}

This research work is supported by the International Science \& Technology Cooperation Program of China under Grant Number 2013DFB10070, the China Hunan Provincial Science \& Technology Program under Grant Number 2012GK4106, the Mittal Innovation Project of Central South University (No. 12MX15) and Hunan Provincial Innovation Foundation For Postgraduate, and NSERC, Canada. Ju Ren is also financially supported by the China Scholarship Council.

\bibliographystyle{elsarticle-num}
\bibliography{Reference}

\end{document}